\def\ba #1\ea{\begin{align} #1 \end{align}}
\def\bas #1\eas{\begin{align*} #1 \end{align*}}
\newcommand{\R}{\mathbb{R}}
\newcommand{\id}{\mathbb{I}}
\newcommand{\C}{\mathbb{C}}
\newcommand{\h}{\mathcal{H}}
\newcommand{\scrA}{\mathscr{A}}
\newcommand{\Spec}{\textrm{Spec }}
\newcommand{\Hom}{\textrm{Hom}}
\newcommand{\g}{\mathfrak{g}}
\newcommand{\CP}{\C\mathbb{P}}
\newcommand{\dt}{t\frac{\partial}{\partial t}}
\newtheorem{theorem}{Theorem}[section]
\newtheorem{lemma}[theorem]{Lemma}
\newtheorem{remark}{Remark}
\newtheorem{definition}{Definition}
\begin{document}

\markboth{Susama Agarwala}
{Momentum cut-off and dimensional
  regularization}

\title{Geometrically relating momentum cut-off and dimensional
  regularization}

\author{SUSAMA AGARWALA}
\address{ Mathematics Department \\ California Institute of Technology \\
1200 E California Ave. \\
Pasadena, CA 91106}
\email{susama@caltech.edu}

\maketitle

%\begin{history}
%\received{(24 Feb 2012)}
%\revised{(3 June 2012)}
%\end{history}

\begin{abstract} The $\beta$ function for a scalar field theory describes the
  dependence of the coupling constant on the renormalization mass
  scale. This dependence is affected by the choice of regularization
  scheme. I explicitly relate the $\beta$ functions of momentum cut-off
  regularization and dimensional regularization on scalar field
  theories by a gauge transformation using the Hopf algebras of the
  Feynman diagrams of the theories.
\end{abstract}

%\subjclass[2010]{81R50,16T05}
\keywords{$\beta$ function, Hopf Algebra, momentum cut-off regularization, dimensional regularization}

\section{Introduction}
Perturbative quantum field theories (QFTs), when naively calculated,
lead to divergent integrals and undefined quantities. To address this
problem, physicists have developed many regularization and
renormalization schemes to extract finite values from divergent
integrals. Introducing a regularization parameter forces the
quantities in the Lagrangian of the field theory to be dependent on
the energy scale of the calculation. This scale dependence is captured
by a new parameter called the renormalization mass. The theory's
dependence on the renormalization mass is described by a set of
differential equations, called the renormalization group equations, or
RGEs. The simplest of these solves for the dependence of the coupling
constant on the renormalization mass, and gives the $\beta$ function
of the theory. In the perturbative case, the RGEs depend on the
regularization scheme. Different regularization schemes give rise to
different RGEs. Very little is understood about the relationship
between different regularization schemes.

In this paper, I compare regularization schemes with logarithmic
singularities and finite poles to those with only finite
poles. Specifically, I study the relationship between sharp momentum
cut-off regularization and dimensional regularization, and the
associated $\beta$ functions.

Recent literature has emerged geometrically describing this process of
renormalization and regularization for a QFT in which the perturbative
$\beta$ function for dimensional regularization is defined by a
connection on a renormalization bundle in which different
regularization schemes correspond to sections \cite{CK00, CK01, CM06,
  EGK04, EM, thesis1}.  I extend the analysis in these papers in two
directions. First, I extend the Birkhoff decomposition of characters
developed in the literature to include algebras with logarithmic
singularities found in momentum cut-off and related regularization
schemes. Second, I look at a renormalization bundle that admits a
number of different renormalization group actions. Geometrically, the
perturbative $\beta$ function is the vector field generating the group
of one parameter diffeomorphisms induced by the renormalization group
action.  By incorporating multiple renormalization group actions into
one bundle, where each defines a one parameter family of
diffeomorphisms, I can write the perturbative $\beta$ function for
each of these renormalization group actions in terms of a connection
on this extended renormalization bundle. As a result, perturbative
$\beta$ functions for dimensional regularization and momentum cut-off,
which are known to be different for gauge theories beyond the three
loop level can be related in terms of a gauge transformation.

Section 1 recalls some useful facts about Feynman integrals,
dimensional regularization and cut-off regularization. Section 2
constructs the new renormalization bundle and defines the relevant
$\beta$ functions in terms of connections on it.

\section{Momentum Cut-off and Dimensional Regularization}

In this section I consider Feynman integrals of a massive $\phi^4$
theory in $\R^4$ \bas \mathcal{L} = \frac{1}{2} \phi(\Delta + m^2)\phi
+ g \phi(x)^4 \; .\eas The same arguments can be made for other
renormalizable theories. For a graph, $\Gamma$, with $l$ loops, $I$
internal edges, and $J$ external edges with assigned momenta $e_1
\ldots e_J$, the Feynman integral is of the form \ba
\int_{\R^{4l}}\prod_{k=1}^I \frac{1}{f_k(p_i,e_j)^2 + m^2}
\prod_{i=1}^ld^4p_i \label{example}\; ,\ea where the $p_i$ are the
loop momentum assigned to each loop, $f(p_i,e_j)$ is a linear
combination of the loop and external momenta representing the momenta
assigned to each internal leg, and the square refers to a dot product
of the vectors. All calculations in this paper are done in Euclidean
space, all integrals have been Wick rotated. These integrals,
\eqref{example}, are generally divergent as written. The process of
regularization and renormalization extracts physical, finite values
from these divergent integrals. In this section, I recall properties
of dimensional regularization and momentum cut-off regularization. The
latter can be renormalized using the BPHZ algorithm, which iteratively
subtracts off certain Taylor jets in the external momenta. The former
can be renormalized using the BPHZ algorithm with a minimal
subtraction operator replacing the Taylor jet subtraction
operator. Section \ref{BPHZcutoff} recalls that using the minimal
subtraction operator for momentum-cutoff regularization gives a valid
renormalization scheme.

For dimensional regularization, write the integral in \eqref{example}
in spherical coordinates, \bas A(4)^l \int_0^\infty\prod_{k=1}^I
\frac{1}{f_k(p_i,e_j)^2 + m^2} \prod_{i=1}^lp_i^3 dp_i \; ,\eas where
$A(d) = \frac{\Gamma(d)}{(4\pi)^{d/2}}$ is the volume of $S^{d-1}$,
the sphere in $d-1$ dimensions. Dimensional regularization exploits
the fact that the integral above is convergent if taken over $d = 4+
z$, dimensions, with $z$ a complex parameter. Notice that $A(d)$ is
holomorphic in $z$, and does not contribute to the polar structure of
the graph. The dimensionally regularized integral is \bas
\varphi_{dr}(z)(\Gamma) = A(d)^l \int_0^\infty\prod_{k=1}^I
\frac{1}{f_k(p_i,e_j)^2 + m^2} \prod_{i=1}^lp_i^{d-1} dp_i \; .\eas
Put another way, dimensional regularization assigns a holomorphic
function, $A(d)$, times the Mellin transform of each loop integral in
the Feynman integral.  If the original integral is divergent, this
expression has a pole at $d = 4$.

Momentum cut-off regularization multiplies the integrand of the
Feynman integral in polar coordinates by a cutoff function. The
simplest is to impose a sharp cut-off function \bas \chi_\Lambda(p)
=\begin{cases} 1 & \text{if $p\leq \Lambda$,} \\ 0 &\text{if $p >
  \Lambda$.}\end{cases} \; .\eas However, this destroys some nice
analytic properties, and sometimes it is better to examine a smooth
cutoff function. The calculations in the paper are done using sharp
cut-off, but the analysis generalizes to the smooth case. The
philosophy behind cut-off regularization is that physical theories are
only valid in a certain domain. Once the energy scale is large enough,
one doesn't expect the theory to hold. Therefore, one should only
consider energy scales at which the theory is valid.  The momentum
cut-off regularized version of \eqref{example} is \bas
\varphi_{mc}(\Lambda)(\Gamma) = \int_{-\Lambda}^\Lambda\prod_{k=1}^I
\frac{1}{f_k(p_i,e_j)^2 + m^2} \prod_{i=1}^l d^4p_i \; .\eas 

\begin{definition}
A one particle irreducible graph, or a 1PI graph, is a connected graph that
is still connected after the removal of any single (internal)
edge.\end{definition}

Dimensional analysis and power counting arguments show that the only
divergent integrals of renormalizable $\phi^4$ theory in $\R^4$ are
those associated to 1PI graphs with either 2 or 4 external legs ($J
\in \{2, 4\}$) \cite{IZ}.

\begin{definition}
For $\phi^4$ in $\R^4$, the superficial degree of divergence of a 1PI
graph, $\Gamma$ is $\omega(\Gamma) = 4l -2I$. \end{definition}

If $\omega(\Gamma) <0$ then the integral is convergent. If
$\omega(\Gamma) \geq0$, the integral is divergent, \cite{IZ} \S8.1.3.
Dimensional regularization of integrals in renormalizable theories
gives holomorphic functions with finite poles at $z = 0$. Momentum
cutoff regularization with a sharp cut-off function has logarithmic and
polynomial singularities at $\Lambda \rightarrow \infty$, \cite{IZ}
\S 8.2.1. One can impose different cut-off functions to maintain
smoothness or other analytic properties. Then the regulator depends on
the cutoff function.

\subsection{Renormalization group action}
In this section I explicitly show the effect of the renormalization
group on Feynman rules regulated by dimensional regularization and
momentum cutoff regularization. The unregularized Lagrangian for a
field theory is scale invariant, \bas \int_{\R^4} \mathcal{L}(x) d^4x =
\int_{\R^4} \mathcal{L}(tx) d^4(tx) \eas for $t \in \R_{>0}$. Let
$\mathcal{L}(z, x)$ be the regularized Lagrangian density, with
regulator $z$. This scale invariance no longer holds \bas
\int_{\R^4}\mathcal{L}(z, x) d^4x \neq \int_{\R^4} \mathcal{L}(z, tx)
d^4(tx) \;.\eas The action of the renormalization group translates
between the results of different energy scales of the Lagrangian
density for a theory.

Strictly speaking, the renormalization group is actually a torsor $M
\simeq \R_{>0}$. Fixing an energy scale for the theory determines the
identity of the renormalization group. To see the effect of the
renormalization group on Feynman integrals, I write $\varphi_{dr}(m,
e_j, t, z)(\Gamma)$ and $\varphi_{mc}(m, e_j, t, \Lambda)(\Gamma)$ as
the Feynman integral associated to the graph $\Gamma$. Here $m$ is the
mass, $e_j$, the external momenta and $t$ the scale at which the
theory is being evaluated. A Feynman integral taken at the energy
scale $t$ is integrated against the variables $tp_i$, where $p_i$ are
the loop momenta of the graph $\Gamma$. Explicitly, \ba
t^{zl}\varphi_{dr}(m, e_j, 1, z)(\Gamma)=
t^{-\omega(\Gamma)}\varphi_{dr}(tm, te_j, t, z)(\Gamma) 
\label{scalingdr} \\ \varphi_{mc}(m, e_j, 1, \Lambda)(\Gamma)=
t^{-\omega(\Gamma)}\varphi_{mc}(tm, te_j,t, t\Lambda)
(\Gamma) \label{scalingmc} \;. \ea

The extra factor of $t^{zl}$ in the case of dimensional
regularization, is called the t'Hooft mass. The integrals are written
in this manner to keep certain quantities dimensionless. %The
%introduction of an energy scale $t$ changes the mass of the theory, $m
%\rightarrow tm$. Changing the energy scale changes the external
%momenta from $e_j \rightarrow te_j$. In the case of momentum cutoff
%regularization, it changes the regulator $\Lambda \rightarrow
%t\Lambda$.
On the level of the Lagrangian density, introducing the
energy scale also affects the coupling constant $g$ and the field
$\phi$.

The effect of the action on the Lagrangian defining the theory is
calculated by writing the regularized Lagrangian in terms of
renormalized and counterterm components. The bare, or unrenormalized,
Lagrangian is \bas \mathcal{L}_B = \frac{1}{2}(|d\phi_B|^2
-m_B^2\phi_B^2) + g_B\phi_B^3\; .\eas A renormalized theory gives
Green's functions of a renormalized field, $ \phi_B = \sqrt{Z(g_B, m_B,
  z)}\phi_r$, where $\lim_{z \rightarrow 0} Z-1 = \infty$.  Then the
bare Lagrangian can be written \ba \mathcal{L}_B &=
\frac{1}{2}Z|d\phi_r|^2 -m_r^2Z\phi_r^2) + g_rZ^{3/2}\phi_r^3 \label{bare}\\ &=
\frac{1}{2}(|d\phi_r|^2 -m_r^2\phi_r^2) + g_r\phi_r^3 \nonumber \\ & \quad
\quad + \frac{1}{2}((Z - 1)|d\phi_r|^2 - (Z
-1)m_r^2\phi_r^2)+(Z^{3/2}-1)g_r\phi_r^3 \; . \nonumber \ea The second line is
called the renormalized Lagrangian, consisting of finite quantities
$\mathcal{L}_{fp}$, and the last line is the counterterm
$\mathcal{L}_{ct}$. Writing the Lagrangian as the sum \bas
\mathcal{L}_B = \mathcal{L}_{ct} + \mathcal{L}_{fp} \eas shows the
components that lead to counterterm and finite parts of the Feynman
integrals. For more details on this process see \cite{Ti}, chapters 21
and 10.

\begin{remark}
Notice that the expression for \eqref{bare} is only possible if $dZ
= 0$. This condition is known as locality of counterterms. A theory
has local counterterms if the counterterms can be expressed as
polynomials in the external momenta and mass of degree at most
$\omega(\Gamma)$. In the case of dimensional regularization, where the
counterterms are defined by the projection onto the singular part of
the Laurent series, this condition is equivalent to saying that the
counterterms are free of the energy scale. Momentum cut-off
regularization, with counterterms defined by Taylor jets taken at $0$
external momenta, as in BPHZ, also has local counterterms
\cite{IZ}\S 8.1. \end{remark}

The quantities $\phi_r$, $m_r$ and $g_r$ depend on the scale of the
theory. The differential equation \bas \beta(g_r) = \frac{1}{t}
\frac{\partial g_r}{\partial t}\eas gives the dependence of the
coupling constant on the scale. The $\beta$ function is useful in
solving the other dependencies. For a perturbative theory, it is
approximated by an asymptotic expansion by loop number of the theory. For
example, to one loop order, the perturbative $\beta$ function for a
scalar theory is \bas \beta(g) = \frac{3g^2}{16\pi^2} \; .\eas For
QED, it is \bas \beta(e) = \frac{e^3}{12\pi^2} .\eas

The perturbative $\beta$ function for a theory can change depending on
which regularization method is employed. For a scalar field theory, as
in the example computed above, and for gauge theories, such as QED,
the perturbative $\beta$ function for dimensional regularization and cut off
regularization are the same up to 3 loop orders \cite{IZ}. In this
paper, I study two perturbative $\beta$ functions: $\beta_{dr}$, from
dimensional regularization, and $\beta_{mc}$ from momentum cut-off
regularization. In section 2 of this paper, I show how these two
quantities are vector fields on a bundle generating the
renormalization group action of the respective regularization scheme.

\subsection{BPHZ on momentum cut-off regularization \label{BPHZcutoff}}

The divergences in momentum cut-off regularization can be subtracted
off by BPHZ renormalization \cite{IZ}. The BPHZ algorithm calculates
the counterterms, or divergent quantity, associated to $\Gamma$ as the
Taylor series of $\varphi(\Lambda)(e_i)(\Gamma)$ in its external
momenta, evaluated at $0$ external momenta, calculated up to
$\omega(\Gamma)$. Let $e_I$ be the multi-index $(e_{i_1}, \ldots,
e_{i_I})$, with $e_{i_j} \in \{e_1, \ldots e_{J-1}\}$, and $D_{e_I} =
\frac{\partial}{\partial e_{i_1}} \ldots \frac{\partial}{\partial
  e_{i_I}}$. Then \bas T(\varphi_{mc}(tm, te_j, t\Lambda))(\Gamma) =
\sum_{k=0}^{\omega(\Gamma)}\sum_{|I|= k}\left(D_{e_I} \varphi_{mc}(tm,
te_j, t\Lambda)(\Gamma)|_{\vec{e} = 0}\right)\frac{e_{i_1}\ldots
  e_{i_I}}{i!} \; .\eas

\begin{definition}
Let $T$ be the Taylor series operator described above, \bas T(f(e_j,
\Lambda) (\Gamma)) = \sum_{k=0}^{\omega(\Gamma)}\sum_{|I|=
  k}\frac{e_{i_1}\ldots e_{i_I}}{i!}D_{e_I} f(e_j, \Lambda)(\Gamma)
\;. \eas \end{definition}

Write the unrenormalized quantity $U(\Gamma) = \varphi_{mc}(\Lambda,
m, e_i) (\Gamma)$.  The counterterm is then \bas C_T(\Gamma) =
-T\left(U(\Gamma) + \mathop{\sum_{\gamma \subset
    \Gamma}}_{\text{divergent}} C_T(\gamma)U(\Gamma/\gamma) \right),
\eas where the sum is over all sub-graphs of $\Gamma$ that are
divergent. The graph $\Gamma/\gamma$ is obtained by replacing each
connected component of $\gamma$ by a single vertex. The renormalized
part is the sum, \bas R_T(\Gamma) = U(\Gamma) + C_T(\Gamma) +
\mathop{\sum_{\gamma \subset \Gamma}}_{\text{divergent}}
C_T(\gamma)U(\Gamma/\gamma)\;. \eas If I define a preparation map \bas
\bar{R}_T(\Gamma) = U(\Gamma) + \mathop{\sum_{\gamma \subset
    \Gamma}}_{\text{divergent}}C_T(\gamma)U(\Gamma/\gamma) \;, \eas
then $R_T= (1-T) \bar{R}_T$ and $C_T = -T \bar{R}_T$.  The
counterterms thus derived are local. Let $K_{C_T}(m, e_i,
p_j)(\Gamma)$ and $K_{\phi_{mc}}(m, e_i, p_j)(\Gamma)$ be the kernels of
the integrals $C_T(\Gamma)$ and $U(\Gamma)$. By $ C_T(\gamma)
U(\Gamma/\gamma)$, I mean the convolution product \bas C_T(\gamma)
U(\Gamma/\gamma)= \int_0^\Lambda K_{C_T}(m, e_i,
p_j)(\gamma) K_{\phi_{mc}}(m, e_i, p_j)(\Gamma/\gamma)\prod_{j=1}^l dp_j \;.\eas

Notice that if $\gamma$ has 2 external edges, then
$\omega(\Gamma/\gamma) + 2 = \omega(\Gamma)$. That is, there is an
extra vertex in the contracted graph that represents the insertion
point of $\gamma$ \cite{CK01}.

\begin{lemma}
The superficial degree of divergence of a graph is conserved under
addition of the superficial degrees of divergence of subgraphs and
contracted graphs: \bas \omega(\Gamma) = \omega(\gamma) +
\omega(\Gamma/\gamma)\;.\eas \label{degree}
\end{lemma}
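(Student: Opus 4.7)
The plan is to decompose $\omega(\Gamma) = 4l(\Gamma) - 2I(\Gamma)$ and establish the additivity of the loop number and the internal edge count separately. Since $\omega$ is $\Z$-linear in these two quantities, the lemma will follow once each of them is shown to split as a sum over $\gamma$ and $\Gamma/\gamma$. All of the work reduces to elementary bookkeeping on the vertex, edge, and component counts of $\gamma$ under the contraction operation.

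For internal edges, I would partition the internal edges of $\Gamma$ according to whether both endpoints lie in the vertex set of $\gamma$ or not. The edges with both endpoints in $\gamma$ are by definition the internal edges of $\gamma$, contributing $I(\gamma)$. The remaining edges form exactly the edge set of $\Gamma/\gamma$: edges with one endpoint in $\gamma$ become edges incident to the new contracted vertex, while edges disjoint from $\gamma$ are unchanged. This gives $I(\Gamma) = I(\gamma) + I(\Gamma/\gamma)$ directly.

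For loops, I would invoke Euler's formula component by component. Let $V(\cdot)$ denote the vertex count and $c(\gamma)$ the number of connected components of $\gamma$. Then $l(\gamma) = I(\gamma) - V(\gamma) + c(\gamma)$, and because contraction collapses each connected component of $\gamma$ to a single new vertex, $V(\Gamma/\gamma) = V(\Gamma) - V(\gamma) + c(\gamma)$. Applying $l(G) = I(G) - V(G) + 1$ to the connected graphs $\Gamma$ and $\Gamma/\gamma$ and substituting the two identities above, the $V(\gamma)$ and $c(\gamma)$ contributions cancel between the two terms and one obtains $l(\gamma) + l(\Gamma/\gamma) = I(\Gamma) - V(\Gamma) + 1 = l(\Gamma)$. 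Combining this with the edge identity and multiplying through by $4$ and $-2$ respectively yields $\omega(\Gamma) = \omega(\gamma) + \omega(\Gamma/\gamma)$.

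There is no genuine obstacle here, only one point that requires care rather than cleverness: the treatment of a disconnected divergent subgraph $\gamma$. Each connected component of $\gamma$ contributes its own $+1$ to Euler's formula, producing the factor $c(\gamma)$ in $l(\gamma)$; this factor is precisely cancelled by the $c(\gamma)$ new vertices introduced in $\Gamma/\gamma$, so the additivity persists in the disconnected case. Once this is verified, the consistency check $\omega(\Gamma/\gamma) + 2 = \omega(\Gamma)$ for a divergent 2-leg $\gamma$ (where $\omega(\gamma) = 2$ in $\phi^4$ theory on $\R^4$) falls out as a special case, confirming the remark preceding the lemma.
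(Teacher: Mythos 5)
Your proof is correct and follows essentially the same route as the paper, whose proof simply writes $\omega(\Gamma)=dL(\Gamma)-2I(\Gamma)=dL(\gamma)-2I(\gamma)+dL(\Gamma/\gamma)-2I(\Gamma/\gamma)$ and asserts the separate additivity of the loop number and the internal edge count without justification; your Euler-formula bookkeeping, including the cancellation of the $c(\gamma)$ terms for disconnected $\gamma$, supplies exactly the details the paper omits. One small imprecision: the internal edges of $\gamma$ should be taken to be the edges constituting the subgraph $\gamma$ rather than all edges of $\Gamma$ with both endpoints in the vertex set of $\gamma$ (an edge of $\Gamma$ not in $\gamma$ joining two vertices of $\gamma$ becomes a self-loop of $\Gamma/\gamma$ and must be counted there), but with that reading the identity $I(\Gamma)=I(\gamma)+I(\Gamma/\gamma)$ and the rest of your argument go through unchanged.
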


\begin{proof}
Write \bas\omega(\Gamma) = dL(\Gamma) - 2I(\Gamma) = dL(\gamma) -
2I(\gamma) + dL(\Gamma/\gamma) - 2I(\Gamma/\gamma)\;.\eas This proves
the theorem.
\end{proof}

For momentum cut-off regularization, as with dimensional
regularization, the Taylor subtraction operator $T$ can be replaced by
a minimal subtraction operator $\pi$.

\begin{definition}
The minimal subtraction operator for a regularization method $\pi$ is
a projection onto only the terms in a series that are singular at a
predefined limit.
\end{definition}

For any function of this form, \bas f(\Lambda, e_i, m) = \sum_{k=
  -\infty}^n\sum_{j = 0}^\infty a_{k,j}(m, e_i) \Lambda^k
\log^j(\Lambda/m) \;,\eas the minimal subtraction operator projects
onto the terms that are ill defined as $\Lambda \rightarrow \infty$,
\bas \pi (f) = \sum_{j > 0, k\geq 0} a_{k,j}(m, e_i) \Lambda^k
\log^j(\Lambda/m) + \sum_{k > 0} a_{k,0}(m, e_i) \Lambda^k \; .\eas In
the case of momentum cut-off regularization, one can write
\cite{Collinsbook} \S5.11 \bas \varphi_{mc}(\Lambda) = \sum_{k=
  -\infty}^n\sum_{j = 0}^\infty a_{k,j}(m, e_i) \Lambda^k
\log^j(\Lambda/m) \;.\eas The preparation map associated to the
minimal subtraction is \bas \bar{R}_\pi(\Gamma) = U(\Gamma) +
\mathop{\sum_{\gamma \subset \Gamma}}_{\text{divergent}} C_\pi(\gamma)
U(\Gamma/\gamma))\;, \eas with \bas C_\pi(\Gamma) =
-\pi\bar{R}_\pi(\Gamma) \quad \textrm \quad R_\pi(\Gamma) =
(1-\pi)\bar{R}_\pi(\Gamma) \; .\eas It is a well established fact in
the physics literature that $R_\pi(\Gamma)$ is finite and that
$C_\pi(\Gamma)$ is local, for instance, see \cite{Collinsbook}. In
this paper, I present a different proof of these facts. First, I need
the following lemma.

\begin{lemma}
The quantity $(T-\pi)\bar{R}_T(\Gamma)$ is finite as $\Lambda
\rightarrow \infty$ and a polynomial in external momenta of $\Gamma$
of degree $\omega(\Gamma)$. \label{otherway}
\end{lemma}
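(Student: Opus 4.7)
The plan is to reduce the lemma to the clean identity $(T-\pi)\bar{R}_T(\Gamma) = (1-\pi)T\bar{R}_T(\Gamma)$, after which both claims follow together. Since $T$ projects onto the Taylor jet of total degree $\omega(\Gamma)$ in the external momenta $e_j$, the quantity $T\bar{R}_T(\Gamma)$ is tautologically a polynomial of degree $\omega(\Gamma)$ in the $e_j$ whose coefficients depend on $m$ and $\Lambda$. The operator $\pi$ acts on each such coefficient by projecting onto its $\Lambda$-divergent part (positive powers of $\Lambda$ or positive powers of $\log(\Lambda/m)$), so in particular it preserves the polynomial degree in external momenta. Hence $(1-\pi)T\bar{R}_T(\Gamma)$ is automatically a polynomial of degree $\omega(\Gamma)$ in the $e_j$ that is finite as $\Lambda \to \infty$.

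The substantive step is to verify $\pi\bar{R}_T(\Gamma) = \pi T\bar{R}_T(\Gamma)$, i.e.\ that $\pi$ cannot distinguish $\bar{R}_T(\Gamma)$ from its Taylor jet at $\vec{e}=0$. For this I would invoke the classical BPHZ convergence theorem, whose applicability to momentum cut-off regularization is the subject of Section~\ref{BPHZcutoff}: the renormalized quantity $R_T(\Gamma) = (1-T)\bar{R}_T(\Gamma)$ is finite as $\Lambda \to \infty$. Applying $\pi$ to both sides gives $\pi R_T(\Gamma) = 0$, which rearranges to $\pi \bar{R}_T(\Gamma) = \pi T \bar{R}_T(\Gamma)$. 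Substituting this back,
\[
(T-\pi)\bar{R}_T(\Gamma) \;=\; T\bar{R}_T(\Gamma) - \pi T\bar{R}_T(\Gamma) \;=\; (1-\pi)\,T\bar{R}_T(\Gamma),
\]
which by the first paragraph has exactly the two desired properties.

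The main obstacle is thus simply locating BPHZ convergence for the Taylor-subtraction cut-off scheme as a black box; everything else is unwinding definitions and using that $T$ and $\pi$ commute, since they project on disjoint sets of indices in the joint expansion in $e_j$, $\Lambda$, and $\log(\Lambda/m)$. This lemma is positioned to serve as the bridge that will allow the finiteness of $R_\pi$ and locality of $C_\pi$ to be deduced from the corresponding well-known properties of the Taylor-subtraction pair $(R_T, C_T)$, which is presumably how the following proof of the validity of the minimal subtraction scheme for momentum cut-off will proceed.
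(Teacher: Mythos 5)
Your proof is correct, and it reaches the conclusion by a genuinely different (and arguably tighter) route than the paper. The paper treats the two claims separately: finiteness is obtained by writing $(T-\pi)\bar{R}_T(\Gamma) = (\id-\pi)\bar{R}_T(\Gamma) - (\id-T)\bar{R}_T(\Gamma)$ and noting both terms are finite, while the polynomial bound is obtained by asserting an explicit form for the singular part of $\bar{R}_T(\Gamma)$ (its equation \eqref{piRt}), namely that the divergent coefficients $a_{ij}(p)$ are polynomials of degree at most $\omega(\Gamma)$ in the external momenta, and then combining this with the locality of $C_T = -T\bar{R}_T$. You instead establish the single identity $(T-\pi)\bar{R}_T(\Gamma) = (\id-\pi)T\bar{R}_T(\Gamma)$ via $\pi R_T(\Gamma)=0$, from which both properties drop out at once: $T\bar{R}_T$ is tautologically a polynomial of degree $\omega(\Gamma)$ and $\id-\pi$ preserves that degree while killing all $\Lambda$-divergent coefficients. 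Both arguments rest on the same external input, the BPHZ convergence theorem for Taylor subtraction in the cut-off scheme (finiteness of $R_T=(\id-T)\bar{R}_T$), which you name explicitly and the paper invokes more tacitly (``finite by construction''). What your route buys is that the paper's equation \eqref{piRt} --- the locality of the divergent part of the prepared quantity --- becomes a corollary of your identity rather than an independent assertion; the one caveat is that the step ``$R_T$ finite $\Rightarrow \pi R_T = 0$'' silently uses that the asymptotic expansion of $\varphi_{mc}$ in powers of $\Lambda$ and $\log(\Lambda/m)$ is unique, so that a function with a finite limit has vanishing divergent coefficients, but the paper relies on the same expansion (Collins \S 5.11) throughout, so this is not a gap relative to its standing assumptions.
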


\begin{proof}
The finite limit comes from the fact that both
$(\id-T)(\bar{R}_T(\Gamma))$ and $(\id-\pi)(\bar{R}_T(\Gamma))$ are
finite by construction. The singular terms are exactly \ba
\pi(R_T(\Gamma)) = \sum_{i>0, j>0}a_{ij}(p)\Lambda^i\log^j(\Lambda/m)
+ \sum_{ j>0}a_{0j}(p)\log^j(\Lambda/m) \label{piRt}\ea where
$a_{ij}(p)$ are polynomials of degree at most $\omega(\Gamma)$ in the
external momenta of $\Gamma$. Since $C_T(\Gamma)$ is local, this shows
that $(T-\pi)\bar{R}_T(\Gamma)$ is polynomial of at most degree
$\omega(\Gamma)$ in the external momenta.
\end{proof}

\begin{theorem}
The BPHZ preparation map, $\bar{R}_\pi$ on momentum cut-off
regularization, defines local counterterms, $C_\pi$ and finite
renormalized quantities, $R_\pi$.
\end{theorem}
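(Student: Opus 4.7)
The plan is to prove by strong induction on the loop number $l(\Gamma)$ the joint statements that $C_\pi(\Gamma)$ is local (polynomial in external momenta of degree at most $\omega(\Gamma)$, with coefficients depending on $\Lambda$ and $m$), that $R_\pi(\Gamma)$ is finite as $\Lambda\to\infty$, and that the auxiliary difference $C_\pi(\Gamma)-C_T(\Gamma)$ is polynomial in external momenta of degree at most $\omega(\Gamma)$ and finite as $\Lambda\to\infty$. The strategy is to exploit the known BPHZ result for the Taylor operator $T$ by comparing $\bar{R}_\pi$ with $\bar{R}_T$ and invoking Lemma \ref{otherway}. A convenient preliminary observation is that $T$ and $\pi$ commute on series of the form $\sum_{k,j} a_{k,j}(m,e_i)\Lambda^k\log^j(\Lambda/m)$, since $T$ differentiates the coefficients $a_{k,j}$ with respect to the external momenta while $\pi$ selects the monomials singular in $\Lambda$.

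For the base case of a primitive graph $\Gamma$ (no divergent proper subgraphs), $\bar{R}_\pi(\Gamma)=U(\Gamma)=\bar{R}_T(\Gamma)$, so $C_\pi(\Gamma)-C_T(\Gamma)=(T-\pi)\bar{R}_T(\Gamma)$, which by Lemma \ref{otherway} is polynomial of degree $\omega(\Gamma)$ and finite. Locality of $C_T$ (standard BPHZ) then gives locality of $C_\pi$, and $R_\pi(\Gamma)=R_T(\Gamma)+(C_\pi(\Gamma)-C_T(\Gamma))$ is a sum of finite quantities.

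For the inductive step, I would write $\Delta(\Gamma) = \bar{R}_\pi(\Gamma)-\bar{R}_T(\Gamma) = \sum_{\gamma\subset\Gamma}\bigl[C_\pi(\gamma)-C_T(\gamma)\bigr]U(\Gamma/\gamma)$; by the inductive hypothesis each difference in brackets is a polynomial in $e_\gamma$ of degree at most $\omega(\gamma)$ and finite in $\Lambda$. Then
\bas
C_\pi(\Gamma)-C_T(\Gamma) \;=\; T\bar{R}_T(\Gamma)-\pi\bar{R}_\pi(\Gamma) \;=\; (T-\pi)\bar{R}_T(\Gamma)\;-\;\pi\Delta(\Gamma).
\eas
The first summand is polynomial of degree $\omega(\Gamma)$ and finite by Lemma \ref{otherway}. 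For the second summand, I would use Lemma \ref{degree}: inserting a polynomial of degree $\omega(\gamma)$ in the loop momentum at the contraction slot into $U(\Gamma/\gamma)$ produces an integral whose residual $\Lambda$-divergences, after $\pi$-projection, are polynomial in the external momenta of $\Gamma$ of degree at most $\omega(\gamma)+\omega(\Gamma/\gamma)=\omega(\Gamma)$. This is the familiar power-counting argument used to establish locality of ordinary BPHZ counterterms, adapted to the situation where the subgraph has already been rendered polynomial by the inductive hypothesis. Putting the pieces together yields locality of $C_\pi(\Gamma)$ and finiteness of $R_\pi(\Gamma)=(\id-\pi)\bar{R}_\pi(\Gamma)$.

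The main obstacle is the second bullet of the inductive step: showing that $\pi\Delta(\Gamma)$ is genuinely polynomial in the external momenta of $\Gamma$ of the correct degree. In the Taylor version this is automatic because $T$-subtraction at $\gamma$ dovetails with $T$-subtraction at $\Gamma$; with the mixed scheme one must track how a polynomial-in-loop-momentum insertion at $\gamma$ propagates through the remaining cut-off integral. Here the commutation $[T,\pi]=0$ is useful, since it allows the splitting $\pi\Delta=\pi T\Delta+\pi(\id-T)\Delta$ and a separate treatment of each piece via Lemma \ref{otherway} and the inductive degree bound from Lemma \ref{degree}.
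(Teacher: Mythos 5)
Your overall strategy coincides with the paper's: both proofs run an induction that compares $\bar{R}_\pi$ with $\bar{R}_T$ through the difference $f(\gamma)=C_\pi(\gamma)-C_T(\gamma)$, both dispose of the term $(T-\pi)\bar{R}_T(\Gamma)$ via Lemma \ref{otherway}, both get finiteness of $R_\pi=(\id-\pi)\bar{R}_\pi$ for free from the definition of $\pi$, and both reduce locality to controlling the cross terms $f(\gamma)U(\Gamma/\gamma)$. The genuine divergence is exactly at the step you flag as the main obstacle. You propose to handle $\pi\Delta(\Gamma)$ by power counting on the insertion integral --- arguing that inserting a degree-$\omega(\gamma)$ polynomial vertex into $U(\Gamma/\gamma)$ yields divergences polynomial of degree $\omega(\gamma)+\omega(\Gamma/\gamma)$ --- supplemented by the splitting $\pi\Delta=\pi T\Delta+\pi(\id-T)\Delta$. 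That is the standard physicists' route, but as written it imports a Weinberg-type convergence statement that you do not prove, and the splitting does not close the gap by itself: $\pi(\id-T)\Delta$ vanishes only if a single overall Taylor subtraction renders the insertion integral finite, which fails once $\Gamma/\gamma$ carries its own subdivergences. The paper sidesteps this with an algebraic substitution: it writes $U(\Gamma/\gamma)=R_T(\Gamma/\gamma)+f(\Gamma/\gamma)-C_\pi(\Gamma/\gamma)$ (available, after further recursive expansion, from the inductive hypothesis on the smaller graph), so that $\pi$ annihilates the finite pieces $R_T(\Gamma/\gamma)+f(\Gamma/\gamma)$ multiplied against the finite $f(\gamma)$, while the only surviving term, $f(\gamma)C_\pi(\Gamma/\gamma)$, is a product of polynomials of degrees $\omega(\gamma)$ and $\omega(\Gamma/\gamma)$, hence of degree $\omega(\Gamma)$ by Lemma \ref{degree}; on such a polynomial $(T-\pi)=(\id-\pi)$, so finiteness and the degree bound are immediate. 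Substituting that decomposition into your inductive step would eliminate the appeal to external power counting and complete your argument; everything else in your proposal (the base case, the use of locality of $C_T$, and the commutation of $T$ with $\pi$ on series in $\Lambda^k\log^j(\Lambda/m)$) matches what the paper does.
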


\begin{proof}

The renormalized quantity is $ R_\pi(\Gamma) = (\id - \pi)
(\bar{R}_\pi(\Gamma) )$  is finite since the
operator $(\id - \pi)$ projects onto the terms that are finite as
$\Lambda \rightarrow \infty$.

The counterterm $-\pi \bar{R}_\pi(\Gamma)$ is local if
$(T-\pi)\bar{R}_\pi(\Gamma)$ is a polynomial in the external momenta
of $\Gamma$ of at most degree $\omega(\Gamma)$ and finite as $\Lambda
\rightarrow \infty$ \cite{Chetyrkyn}. I prove this by induction on the
number of subgraphs of $\Gamma$.

Define \bas f(\Gamma) = C_\pi(\Gamma) - C_T(\Gamma) \;.\eas Since
$(\id - \pi) \bar{R}_\pi(\Gamma)$ and $(\id - T) \bar{R}_T(\Gamma)$
are finite by definition, $\lim_{\Lambda \rightarrow \infty}
f(\Gamma)$ is always finite. If $\gamma$ is a graph with no
subdivergences, then \bas \bar{R}_\pi(\gamma) = \bar{R}_T(\gamma) =
U(\gamma) \;. \eas The counterterm $C_\pi(\gamma) = -\pi(U(\gamma))$
is a polynomial of homogeneous degree at most $\omega(\gamma)$ in $m$
and external momenta \cite{IZ} pg. 385.  Then $- f(\gamma) =
(T-\pi)U(\Gamma)$ is finite and of the correct degree by Lemma
\ref{otherway}.

If $\Gamma$ has a single subdivergence, $\gamma$, \ba
(T-\pi)\bar{R}_\pi(\Gamma) = (T-\pi)U(\Gamma) + (T-\pi) C_\pi(\gamma)
U(\Gamma/\gamma)\; . \label{1sd} \ea The graph $\Gamma/\gamma$ has no
subdivergences. This can be rewritten \bas(T-\pi)\bar{R}_\pi(\Gamma) = (T-\pi)\bar{R}_T(\Gamma) +
(T-\pi) f(\gamma) U(\Gamma/\gamma) \eas Writing \bas U(\Gamma/\gamma) = R_T(\Gamma/\gamma) + f(\Gamma/\gamma) - C_\pi(\Gamma/\gamma) \eas gives \bas (T-\pi)\bar{R}_\pi(\Gamma) = (T-\pi)\bar{R}_T(\Gamma) -
 (T-\pi)\left[f(\gamma)(C_\pi(\Gamma/\gamma) -f(\Gamma/\gamma) -
  R_T(\Gamma/\gamma))\right] \; . \eas From Lemma \ref{otherway},
$(T-\pi)\bar{R}_T(\Gamma)$ is finite and of the correct degree. By
definition of the operators $\pi$ and $T$, $\pi(f(\gamma)
(f(\Gamma/\gamma)F R_T(\Gamma/\gamma)))= 0$ since both $f(\gamma)$ and
$f(\Gamma/\gamma)+ R_T(\Gamma/\gamma)$ are finite. The term
$T(f(\gamma)(f(\Gamma/\gamma) - R_T(\Gamma/\gamma))$ is a polynomial
of the correct degree (by virtue of the external $T$) and finite,
since all the components are finite. It remains to examine
$(T-\pi)(f(\gamma)C_\pi(\Gamma/\gamma))$. The term $f(\gamma)$ is a
polynomial of degree at most $\omega(\gamma)$ and $C_\pi(\Gamma/\gamma)$ is a
polynomial of degree at most $\omega(\Gamma/\gamma)$. Therefore,
$f(\gamma)C_\pi(\Gamma/\gamma)$ is a polynomial of degree at most
$\omega(\Gamma/\gamma) + \omega(\gamma) = \omega(\Gamma)$, and \bas
(T-\pi) f(\gamma)C_\pi(\Gamma/\gamma) = (\id -\pi)
f(\gamma)C_\pi(\Gamma/\gamma) \eas which is finite by definition of
$\pi$ and a polynomial of the correct degree. Thus \bas (T-\pi) \bar{R}_\pi(\Gamma) \eas  is a
polynomial of degree $\omega(\Gamma)$, and finite as $\Lambda
\rightarrow \infty$.

This implies that $f(\Gamma)$ is finite and of degree at most
$\omega(\Gamma)$ for $\Gamma$ a graph with $1$ divergent
subdiagram. Write \bas f(\Gamma) = (T-\pi)\bar{R}_\pi(\Gamma) +
(T-\pi)\bar{R}_T(\Gamma) - T(\bar{R}_\pi(\Gamma)) +
\pi(\bar{R}_T(\Gamma) \; .\eas the first term is finite and of the
correct degree by the argument above. The second term is finite and of
the correct degree by Lemma \ref{otherway}. The third term is of the
correct degree by definition of the operator $T$. By the explicit
expression in \eqref{piRt}, the last term is also of the correct
degree.

Suppose $f(\Gamma)$ is a polynomial of degree at most $\omega(\Gamma)$
in the external momenta, for all $\Gamma$ with fewer than $n$
subdivergences. Then,
\bas(T-\pi)\bar{R}_\pi(\Gamma) =  (T-\pi)\left[
  \bar{R}_T(\Gamma) + \sum_{\gamma \subset
    \Gamma}f(\gamma)U(\Gamma/\gamma) \right]\; ,\eas where each divergent subgraph $\gamma$
has $j<n$ subdivergent graphs. This is finite and of the correct degree
by the arguments presented in the $n = 1$ case. Furthermore,
$f(\Gamma)$ is also finite and a polynomial of the correct degree
for $\Gamma$ a graph with $n$ divergent subgraphs.

\end{proof}

Connes and Kreimer use BPHZ renormalization on dimensional
regularization with the minimal subtraction operator instead of the
Taylor series operator in their work \cite{CK00, CK01}. In this paper,
I extend their work to include cutoff regularization. The substitution
of the minimal subtraction operator for the Taylor series operator in
BPHZ renormalization of different regularization schemes is well
established. For example, Collins \cite{Collins75} does so for
dimensional regularization, Speer \cite{Speer74} for analytic
regularization, and \cite{Collinsbook} \S 5.11.3 or \cite{Reisz88} for
lattice regularization. In the dimensional regularization the
regularized integral is of the form \bas \varphi_{dr}(z)(\Gamma) =
\sum_{i= -n}^\infty a_iz^i \;.\eas The minimal subtraction operator
projects onto the polar part of the Laurent series \bas \pi \circ
\varphi_{dr}(z)(\Gamma) = \sum_{i= -n}^{-1} a_iz^i \;.\eas

Having established that both dimensional regularization and momentum
cut-off regularization can be renormalized by BPHZ renormalization
under a minimal subtraction operator, I define the target algebras of
the regulated integrals such that the same minimal subtraction
operator suffices for both regulation schemes. The regulator for
dimensional regularization, $z$, is a complex parameter, where as the
regulator for momentum cut-off, $\Lambda$ is real. First I show that
the regulator for momentum cut-off can be extended to a complex
parameter as well.

\begin{theorem} 
As a complex regulation scheme, cut-off momentum regulation is
identical to the real case. For $\Lambda \in \R$, \bas
\varphi_{mc}(\Lambda e^{i\theta})(\Gamma) = \int_C\prod_{k=1}^I
\frac{1}{f_k(p_i,e_j)\cdot \overline{f_k(p_i,e_j)} + m^2}
\prod_{i=1}^l d^4p_i = \varphi_{mc}(\Lambda)(\Gamma)\;. \eas \end{theorem}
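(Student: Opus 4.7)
My plan is to exhibit that the proposed complex regulator $\Lambda e^{i\theta}$ enters the definition of $\varphi_{mc}(\Lambda e^{i\theta})(\Gamma)$ only through its modulus $|\Lambda e^{i\theta}| = \Lambda$, so that the $\theta$-dependence drops out tautologically and no genuine analytic continuation argument is needed.

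First, I would settle the interpretation of the contour $C$ attached to a complex cutoff. Since the displayed integrand uses the Hermitian pairing $f_k(p_i,e_j)\cdot\overline{f_k(p_i,e_j)}$ rather than the holomorphic square $f_k\cdot f_k$, the natural domain of integration is the ball in momentum space of radius $|\Lambda e^{i\theta}|$. Because the modulus is phase-invariant, this ball equals the ordinary real ball $\{p_i\in\R^4:|p_i|\leq\Lambda\}$, independent of $\theta$.

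Second, I would verify that the integrand itself reduces to the original one on this contour. The loop momenta $p_i$ and external momenta $e_j$ are real four-vectors, and $f_k$ is a real-linear combination of them, so $f_k(p_i,e_j)$ is real-valued and
\bas
f_k(p_i,e_j)\cdot\overline{f_k(p_i,e_j)} = f_k(p_i,e_j)\cdot f_k(p_i,e_j) = f_k(p_i,e_j)^2 \;.
\eas
Substituting this identity under the integral sign, together with the measure $\prod_{i=1}^l d^4 p_i$ on the real ball, recovers exactly $\varphi_{mc}(\Lambda)(\Gamma)$, with no remaining $\theta$-dependence.

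The step I expect to require the most care is the first: justifying that $C$ depends only on $|\Lambda e^{i\theta}|$ is essentially a choice of convention, but it is the convention dictated by the Hermitian form appearing in the denominator, since any other choice would either make the integrand complex-valued or produce a spurious phase ambiguity in the cutoff condition. Once this convention is granted, the theorem is a one-line observation: the $U(1)$ phase never enters the computation, so $\varphi_{mc}(\Lambda e^{i\theta})(\Gamma) = \varphi_{mc}(\Lambda)(\Gamma)$ holds integrand-by-integrand rather than requiring any Cauchy-type deformation of contour. The upshot is that the complex regulator $\Lambda e^{i\theta}$ is an honest extension of the real regulator $\Lambda$, but it contains no new information beyond the real case, as needed for subsequently comparing it with dimensional regularization on a common footing.
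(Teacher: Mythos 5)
Your conclusion is right, but you reach it by a genuinely different route from the paper, and the difference matters for what the theorem is asserting. The paper does not take the cutoff region to be the real ball of radius $|\Lambda e^{i\theta}|$: it complexifies the loop momenta themselves, defines $\varphi_{mc}(\Lambda e^{i\theta})(\Gamma)$ as the integral along the rotated contour $C=\{te^{i\theta}:t\in[-\Lambda,\Lambda]\}$ in each momentum variable, and then invokes the symmetry of the Hermitian integrand $f_k\cdot\overline{f_k}$ under the $U(1)$ rotation $p\mapsto e^{i\theta}p$ to rotate the contour back to the real segment and recover $\varphi_{mc}(\Lambda)(\Gamma)$. In your version the middle equality of the displayed chain is a tautology, because you have declared $C$ to be the real ball from the outset and observed that $f_k$ is real there; in the paper's version that middle equality is the \emph{definition} of the complexified regulator along a genuinely rotated contour, and the final equality is the content, obtained by undoing the rotation. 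Your reading is more elementary and makes the punchline transparent --- the phase carries no information, which is all that is needed to later set $z=e^{-i\theta}/\Lambda$ and regard $\varphi_{mc}$ as valued in $\scrA$ --- but it quietly substitutes a different contour for the paper's $C$, so as written it proves a relabelling rather than the stated identity with the rotated contour. If you keep your convention, you should say explicitly that you are choosing a phase-blind definition of the complex cutoff and then note that it coincides with the rotated-contour definition precisely because the $U(1)$ invariance of $f_k\cdot\overline{f_k}$ (and of the region $|p|\leq\Lambda$) identifies the two; that one extra sentence is exactly the ``symmetries of the integrand'' step the paper relies on.
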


\begin{proof}
To analytically continue the momentum cut-off regulator, consider the
Feynman integral with complex momentum \bas
\int_{\R^{4l}}\prod_{k=1}^I \frac{1}{f_k(p_i,e_j)\cdot
  \overline{f_k(p_i,e_j)} + m^2} \prod_{i=1}^ld^4p_i \; .\eas Then for
a complex cutoff regulator $\Lambda e^{i\theta} \in \C$, the regulated
Feynman integral is \bas \varphi_{mc}(\Lambda e^{i\theta})(\Gamma) =
\int_C\prod_{k=1}^I \frac{1}{f_k(p_i,e_j)\cdot \overline{f_k(p_i,e_j)}
  + m^2} \prod_{i=1}^l d^4p_i \eas taken along the contour $C = t
e^{i\theta}$ for $t \in [-\Lambda, \Lambda]$. The symmetries of the
integrand give $\varphi_{mc}(\Lambda)(\Gamma) = \varphi_{mc}(\Lambda
e^{i\theta})(\Gamma)$.
\end{proof}

Notice that \ba \varphi_{dr}(z)(\Gamma) \in
\C[z^{-1}][[z]] \label{dralg}\ea for any Feynman diagram
$\Gamma$. Rewrite $\Lambda e^{i\theta} = 1/z$. Then
\ba\varphi_{mc}(z)(\Gamma) \in \C[z^{-1}, \log
  (m|z|)][[z]]\label{mcalg}\ea for any Feynman diagram $\Gamma$. I
keep the factor of $m$ in the logarithmic term for dimensional
considerations. Next I define an algebra $\scrA$ containing both the
algebras defined in \eqref{dralg} and \eqref{mcalg}.

\begin{definition}
Define $ \scrA := \C[z^{-1}][[z]][[y]]/(e^y -zm)$ to be the target
algebra for dimensional regularization and momentum cut-off
regularization.
\end{definition}

For any Feynman diagram $\Gamma$, $\varphi_{dr}(\Gamma),
\varphi_{mc}(\Gamma) \in \scrA$. Any element $f \in \scrA$ can be
written $f = \sum_{j=0}^\infty\sum_{i= -n}^{\infty} a_{ij}
z^iy^j$. The minimal subtraction operator is a projection onto the
subalgebra of $\scrA$ that contains only the term that are singular at
$z = 0$.

\begin{definition}
Let $\pi$ be the the projection on $\scrA$ \bas \pi : \scrA
&\rightarrow& \scrA_-: = (z^{-1}\C[z^{-1}][[y]]\oplus y\C[[y]])/(e^y -
zm) \\ \sum_{j=0}^\infty\sum_{i= -n}^{\infty} a_{ij} z^iy^j &\mapsto&
\sum_{i<0} a_{ij} z^iy^j + \sum_{j >0} a_{0j} y^j \eas that maps to
the subalgebra of $\scrA$ that is singular at $z= 0$, (as $y
\rightarrow -\infty$). Define $\scrA_+$ to be the subalgebra such that
$\scrA = \scrA_- \oplus \scrA_+$. This is the subalgebra of terms that
are finite at $z=0$.
\end{definition}

This is the same as the minimal subtraction operator for momentum
cutoff regularization.

\begin{theorem}
The operator $\pi: \scrA \rightarrow \scrA_-$ restricts to minimal
subtraction operator for dimensional regularization.
\end{theorem}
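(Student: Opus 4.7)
The plan is to simply verify the restriction explicitly, using the definitions of $\scrA$, $\pi$, and the known form of $\varphi_{dr}$. The key observation is that dimensional regularization produces no logarithmic terms in $\Lambda$; by \eqref{dralg}, $\varphi_{dr}(z)(\Gamma)$ lies in the subalgebra $\C[z^{-1}][[z]] \subset \scrA$, which corresponds precisely to those expressions $\sum_{i,j} a_{ij} z^i y^j$ with $a_{ij} = 0$ whenever $j > 0$. So the restriction question reduces to computing $\pi$ on formal Laurent series in $z$ alone.

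First I would make explicit the embedding $\C[z^{-1}][[z]] \hookrightarrow \scrA = \C[z^{-1}][[z]][[y]]/(e^y - zm)$ given by sending $z \mapsto z$, noting that the quotient relation $e^y - zm$ does not act on elements free of $y$, so the embedding is well-defined and injective. Then for any $f = \sum_{i \ge -n} a_i z^i \in \C[z^{-1}][[z]]$, regarded as an element of $\scrA$ with $a_{i0} = a_i$ and $a_{ij} = 0$ for $j > 0$, I would evaluate $\pi(f)$ using the formula in the definition. The second summand $\sum_{j > 0} a_{0j} y^j$ vanishes identically, and the first summand $\sum_{i < 0, j \ge 0} a_{ij} z^i y^j$ collapses to $\sum_{i < 0} a_{i0} z^i$, which is the polar part of the Laurent series.

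Finally, I would identify this with the standard minimal subtraction operator for dimensional regularization, namely $\sum_{i \ge -n} a_i z^i \mapsto \sum_{i=-n}^{-1} a_i z^i$, as recalled at the end of Section \ref{BPHZcutoff}. This matches on the nose.

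There is no real obstacle here — the statement is essentially an unpacking of the definition of $\pi$. The only mild subtlety worth flagging is ensuring that the image of $\varphi_{dr}$ really lies in the $y$-free subalgebra (so that the $\sum_{j>0} a_{0j} y^j$ term truly vanishes, rather than being hidden by the relation $e^y = zm$); this follows from the explicit form of $\varphi_{dr}(z)(\Gamma)$ as $A(d)^l$ times a Mellin transform, since $A(d)$ is holomorphic at $z = 0$ and the Mellin factor contributes only a Laurent expansion in $z$ with no logarithms of $zm$.
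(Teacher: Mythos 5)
Your proposal is correct and follows essentially the same route as the paper: restrict $\pi$ to the $y$-free subalgebra $A = \C[z^{-1}][[z]] \subset \scrA$ and observe that it becomes projection onto the polar part $\C[z^{-1}]$, i.e.\ minimal subtraction. You supply somewhat more detail than the paper does (the explicit formula for $\pi$ on $y$-free elements, and the check that the relation $e^y = zm$ does not smuggle logarithmic terms into the image of $\varphi_{dr}$), but the underlying argument is the same.
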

\begin{proof}
Define $A =\C[z^{-1}][[z]] \subset \scrA$. This is the target algebra
of dimensional regularization. Since this the exactly the space where
the powers of $y$ are $0$, restricting $\pi$ to this domain gives \bas
\pi|_A : A \rightarrow A_- := \C[z^{-1}] \;.\eas Write $A_+ = \C[[z]]
\subset \scrA_+$. Thus the projection map $\pi$ restricts to the
minimal subtraction operator for dimensional regularization.
\end{proof}

In the rest of the paper, I apply the methods of \cite{CK00},
\cite{CK01}, and \cite{CM06} to build a Hopf algebra of Feynman
diagrams, define the counterterms using Birkhoff decomposition, and
define the $\beta$ function for cut-off regularization on the
corresponding renormalization bundle.

\section{The Renormalization Bundle }
In \cite{CK00}, Connes and Kreimer build a Hopf algebra, $\h$, out of
the divergence structure of the Feynman diagrams for a scalar field
theory under dimensional regularization. They use the BPHZ algorithm
to renormalize the theory, replacing Taylor subtraction around $0$
external momenta with the minimal subtraction operator. The key to
constructing this Hopf algebra is the sub-divergence structure of the
graphs as defined by power counting arguments. The co-product of the
Hopf algebra is defined to express the same sub-divergence data as in
Zimmermann's subtraction formula for BPHZ renormalization
\cite{IZ}. Replacing the Taylor series operator in BPHZ for the
minimal subtraction operator does not change the divergence structure
of the diagrams. Therefore, I use the same Hopf algebra to study
cut-off regularization. In \cite{vS06}, van Suijlekom constructs a
Hopf algebra that captures the renormalization structure of QED under
dimensional regularization. This is the same Hopf algebra that is
needed to study QED under cut-off regularization. The arguments in
this paper apply to both scalar $\phi^4$ and QED, even though cut-off
regularization does not preserve the gauge symmetries of QED.

To briefly recall notation, let \bas \h = \C[\{1PI \text{ graphs with
    2 or 4 external edges}\}] \eas  be the Hopf algebra of Feynman
diagrams, with multiplication defined by disjoint union. It is graded
by loop number, with $Y$ the grading operator. If $\Gamma \in \h_n$,
$Y(\Gamma) = n\Gamma$. The co-unit $z$ is $0$ on $\h_{\geq 1}$, and is
the identity map on $\h_0$. An admissible sub-graph of a 1PI graph,
$\Gamma$ is a graph, $\gamma$, or product of graphs, that can be
embedded into $\Gamma$ such that each connected component has 2 or 4
external edges. The graph $\Gamma/\gamma$ is the graph obtained by
replacing each connected component of $\gamma$ with a vertex. The
admissible sub-graphs correspond to the divergences subtracted by
Zimmermann's subtraction algorithm. Using Sweedler notation, the
co-product on $\h$ is given by the sum \bas \Delta \Gamma = 1 \otimes
\Gamma + \Gamma \otimes 1 +\sum_{\gamma \text{ admis}}\gamma \otimes
\Gamma/\gamma \; .\eas Let $\epsilon$ and $\eta$ denote the co-unit
and unit of this Hopf algebra.

The Hopf algebra is connected and each graded component $\h_n$ is
finitely generated as an algebra. Write the graded dual of this Hopf
algebra $\h^* = \oplus_n \h_n^*$. The product on $\h^*$ is the
convolution product $f\star g (\Gamma) = m(f \otimes g)
\Delta(\Gamma)$. The antipode, $S$, on the restricted dual defines the
inverse of a map under this convolution product, $f^{\star -1} = S
(f)$. By the Milnor-Moore theorem, $\h^* \simeq \mathcal{U}(\g)$ is
isomorphic to the universal enveloping algebra of the Lie algebra
$\g$, generated by the infinitesimal derivatives \bas
\delta_\Gamma(\Gamma') = \begin{cases} 1 & \Gamma = \Gamma' \text{
    1PI} \\ 0 & \Gamma \neq \Gamma' \end{cases} \;.\eas The generators
of the Lie algebra are infinitesimal characters \bas
\delta_\Gamma(\gamma\Gamma') = \epsilon(\gamma) \delta_\Gamma(\Gamma')
+ \epsilon(\Gamma') \delta_\Gamma(\gamma) \;.\eas The Lie bracket is
given by $[f, g] = f\star g - g\star f$. The corresponding Lie group
$G = e + \g$ is the group of algebra homomorphisms $\Hom_{alg}(\h, \C)
= \Spec \h$. See \cite{EM} for more discussion of this Lie group and
Lie algebra.

In this paper, I study regularization procedures that induce maps from
 the Hopf algebra $\h$ to the algebra generated by the regulation
parameter, $\scrA$.  In dimensional regularization, $z$ corresponds to
the complex ``dimension'' regulator. In momentum cut-off regulation,
$z$ corresponds to the complexification of the inverse of the cut-off,
$z= \frac{e^{-i\theta}}{\Lambda}$, and polynomials in $y$ correspond
to polynomials in $\log (|z|m)$.  Minimal subtraction on both these
regulation schemes is encoded by considering the direct sum
decomposition $\scrA = \scrA_- \oplus \scrA_+$, where $\scrA_-=
(z^{-1}\C[z^{-1}][[y]]\oplus y\C[[y]])/(e^y - zm)$. The projection map
\bas \pi : \scrA \rightarrow \scrA_-\eas is the subtraction map used
in minimal subtraction for both schemes. This projection map is a
Rota-Baxter operator on $\scrA$. The algebra $\scrA$ and this
Rota-Baxter operator are discussed in detail in \cite{ManchonPaycha}
in the context of cut-off regularization, and other applications.

\begin{definition} A
  Rota-Baxter operator, $R$, of weight $\theta$ on an algebra $A$ is a
  linear map \bas R: A \rightarrow A \eas that satisfies the
  relationship \bas R(x)R(y) + \theta R(xy) = R(xR(y)) + R(R(x)y)
  \;.\eas The pair $(A, R)$ is called a Rota-Baxter
  algebra. \end{definition}

\subsection{Generalization of Birkhoff decomposition}

Let $\varphi_{mc} , \varphi_{dr} \in \Hom_{alg}(\h, \scrA)$ be the algebra
homomorphisms from $\h$, the Hopf algebra of Feynman graphs, to
$\scrA$ the algebra spanned by the regulating parameters corresponding
to momentum cut-off regularization and dimensional regularization
respectively. Paralleling the work of Connes and Kreimer in
\cite{CK00}, I write the counterterm and the renormalized part
of cut-off regularization and dimensional regularization under minimal
subtraction as a Birkhoff-type decomposition of $\varphi_{mc}$ and
$\varphi_{dr}$.

Ebrahimi-Fard, Guo and Kreimer show
that, if the algebra $\scrA$ is endowed with a Rota-Baxter
operator, $R$, there is an unique expression for each $\varphi \in
\Hom_{alg}(\h, \scrA)$ as $\varphi_- \star \varphi_+$ such that
$\varphi_-$ lies in the image of $R$, if $x \in \h$, and $\varphi_-$,
$\varphi_+ \in \Hom_{alg}(\h, \scrA)$. If $R$ corresponds to a
subtraction operator for BPHZ, $\varphi_-(x)$ corresponds to the
counterterm of $x$ and $\varphi_+(x)$ the renormalized part
\cite{EGK04}. The following theorem follows directly from this result.

\begin{theorem}
Let $\varphi \in \Hom_{alg}(\h,\scrA)$. Define the projection map
$\pi: \scrA \rightarrow \scrA_-$. There is a unique decomposition of
each $\varphi = \varphi_-^{\star -1} \star \varphi_+$ with
$\varphi_-(\Gamma) \in \scrA_-$ for $\Gamma \in \ker \epsilon$,
$\phi_-(1)= 1$ and $\varphi(\Gamma) \in G(\scrA_+)$.  \end{theorem}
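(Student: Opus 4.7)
The theorem is advertised as a direct consequence of the Ebrahimi-Fard--Guo--Kreimer factorization theorem \cite{EGK04} recalled just above, so the task reduces to verifying the single hypothesis of that result in the present setting: that $\pi\colon\scrA\to\scrA_-$ is a Rota--Baxter operator of weight $1$ on $\scrA$.

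First I would check that the splitting $\scrA=\scrA_-\oplus\scrA_+$ is a direct sum of two \emph{subalgebras}, not merely of vector subspaces. The piece $\scrA_+$, essentially $\C[[z]]$, is manifestly closed under multiplication. For $\scrA_-$, a bidegree count on monomials suffices: $z^i y^j\in\scrA_-$ precisely when $i<0$, or when $i=0$ and $j>0$, and the product $(z^i y^j)(z^k y^l)=z^{i+k}y^{j+l}$ again satisfies one of these conditions. The delicate point is that passage to the quotient by the ideal $(e^y-zm)$ preserves this splitting, since writing $e^y-zm = (y+y^2/2+\cdots)-(zm-1)$ equates an element of $\scrA_-$ with an element of $\scrA_+$; one has to check that the induced subspaces remain distinct, complementary, and multiplicatively closed in $\scrA$, so that $\scrA_\pm$ descend to genuine subalgebras of the quotient.

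Once this subalgebra splitting is in place, I would invoke the standard one-line calculation: whenever an algebra decomposes as a direct sum of subalgebras $A=A_-\oplus A_+$, the projection $\pi$ onto $A_-$ along $A_+$ satisfies the Rota--Baxter identity of weight $1$,
\begin{equation*}
\pi(x)\pi(y)+\pi(xy) = \pi(\pi(x)y)+\pi(x\pi(y)),
\end{equation*}
proved by writing $x=x_-+x_+$ and $y=y_-+y_+$, using $\pi(\scrA_-)=\scrA_-$ and $\pi(\scrA_+)=0$, and observing that both sides collapse to $2x_-y_-+\pi(x_-y_+)+\pi(x_+y_-)$.

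Finally, with $(\scrA,\pi)$ now established as a Rota--Baxter algebra of weight $1$, the theorem of \cite{EGK04} quoted above applies directly to any character $\varphi\in\Hom_{alg}(\h,\scrA)$ and returns the unique pair $\varphi_\pm\in\Hom_{alg}(\h,\scrA)$ with $\varphi_-(1)=1$, $\varphi_-(\ker\epsilon)\subseteq\scrA_-$, $\varphi_+$ valued in $\scrA_+$, and $\varphi=\varphi_-^{\star-1}\star\varphi_+$, including the claimed uniqueness. The only substantive obstacle is the subalgebra verification of $\scrA_\pm$ modulo the relation $e^y=zm$; everything after that is a verbatim specialization of the EGK result.
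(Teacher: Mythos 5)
Your proposal is correct and follows essentially the same route as the paper: both reduce the statement to the Ebrahimi-Fard--Guo--Kreimer factorization by observing that $\pi:\scrA\rightarrow\scrA_-$ is a Rota--Baxter operator of weight $1$ (the paper then lifts it to $R=\pi\circ\varphi$ on $\Hom(\h,\scrA)$ and writes out the resulting Bogoliubov recursion for $\varphi_\pm$ explicitly). The only difference is one of emphasis: the paper simply asserts the Rota--Baxter property, while you supply the subalgebra-splitting verification for it and rightly flag that the splitting must be checked to survive the quotient by $(e^y-zm)$, a point the paper passes over.
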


\begin{proof} Notice that $\pi:\scrA \rightarrow \scrA_-$ is a
Rota-Baxter operator of weight 1. Let $\Hom(\h, \scrA)$ be the algebra
of linear maps from $\h$ to $\scrA$, with point-wise multiplication and
unit $e = \eta_\scrA \circ \epsilon$. For $\varphi \in \Hom(\h, \scrA)$,
let $R = \pi \circ \varphi$. Then $R$ is a Rota-Baxter operator on
$\Hom(\h, \scrA)$. By extending the convolution product on $\h^*$ to
$\Hom(\h, \scrA)$, each algebra homomorphism $\varphi \in \Hom_{alg}(\h,
\scrA)$ can be uniquely decomposed according to $\pi$. For all $\Gamma
\in \ker(\epsilon)$, \bas \varphi_-(\Gamma) = -\pi(\varphi(\Gamma) +
\sum_{\gamma \text{ admis.}}\varphi_-(\gamma)\varphi(\Gamma//\gamma)
\\ \varphi_+(\Gamma) = (e-\pi)(\varphi(\Gamma) + \sum_{\gamma \text{
    admis.}}\varphi_-(\gamma)\varphi(\Gamma//\gamma)) \;. \eas The maps
$\varphi$, $\varphi_-$ and $\varphi_+$ are algebra homomorphisms from $\h$ to
$\scrA$, $\C \oplus \scrA_-$ and $\scrA_+$ respectively. That is, \bas
\varphi(1) = \varphi_-(1) = \varphi_+(1) = 1_\scrA\; . \eas However, for
$\Gamma \in \ker(\epsilon)$, $\varphi_-(\Gamma) \in \scrA_-$. \end{proof}

This is a generalization of the Birkhoff decomposition theorem, which
says that for any simple closed curve, $C$, in $\CP^1$ that does not pass
through $0$ or $\infty$, and a map \bas \varphi : C \rightarrow G \;,\eas
for a complex Lie group $G$, there is a function $\varphi_-$ that is
holomorphic on the connected component of $\CP^1 \setminus C$ that
contains $\infty$ and a function $\varphi_+$ that is holomorphic on the
connected component of $\CP^1 \setminus C$ that contains $0$, such
that $\varphi= \varphi_- \star \varphi_+$. In the setting of dimensional
regularization, $\varphi_{dr} \in \Hom_{alg}(\h, \C[z^{-1}][[z]]) =
G(\C[z^{-1}][[z]])$, is viewed as a map from a loop in $\Spec
\C[z^{-1}][[z]] \subset \C$ to $G = \Spec \h$. The Birkhoff
decomposition theorem on loops directly gives the existence of such a
decomposition. The Rota-Baxter algebra argument in
\cite{EGK04} generalizes the Birkhoff decomposition setting to other
algebras.
%However, in the more general setting that includes momentum cutoff,
%$\varphi \in G(\scrA)$, \bes \varphi: \Spec \scrA \rightarrow G \;.\eas
%Since $\Spec \scrA \not \subset \C$, the Birkhoff decomposition
%theorem cannot directly be applied, and the above Rota-Baxter
%arguments must be employed.

\subsection{The geometric $\beta$ function}
So far, I have considered $\varphi_{mc}$ and $\varphi_{dr} \in
\Hom_{alg}(\h, \scrA)$ to be sections of a (trivial) $G$ principal
bundle over $\Spec \scrA$. Call this bundle $K \simeq G\times \Spec
\scrA \rightarrow \Spec \scrA$. Sections of this bundle correspond to
algebra homomorphisms from $\h$ to $\scrA$. %The maps $\varphi_{dr}(z,
%e_i, m)$ and $\varphi_{mc}(z,y, e_i, m)$ are two such maps. 

I follow the notation in \cite{CM06} and consider a complex
renormalization group, instead of a real group. For the following
arguments the renormalization group is $\C^\times$.

\begin{definition}
Define the renormalization group as $\C^\times$. Parametrize
it by $t = e^s$ for $s\in \C$.  \end{definition}

Let $\sigma$ be a one parameter family of diffeomorphism on $G(\scrA)$
written \bas \sigma: \C^\times \times G(\scrA) &\rightarrow& G(\scrA)
\\ (t, \varphi(z, y)) &\mapsto& \sigma_t\varphi(z,y) \;. \eas These
one parameter family of diffeomorphisms are a natural generalization
of the renormalization group action on a regularized QFT, written as
an element of $G(\scrA)$. 
\begin{definition} Write $\sigma_t \in
\textrm{Diff}(G(\scrA))$. The orbit under this group of
diffeomorphisms, \bas \{\sigma_t(\varphi(z,y)) | t \in \C^\times\}\;,
\eas of $\varphi(z,y) \in G $ is a one parameter curve in $G(\scrA)$ which I also denote $ \sigma_t (\varphi(z,y))$ in $G$. \end{definition}

Consider the bundle \bas P \simeq K \times \C^\times \rightarrow B
\simeq \Spec \scrA \times \C^\times\; .\eas Sections of this bundle
are algebra homomorphism from $\h$ to $\scrA \otimes \C[t^{-1}, t]$. I
write these sections $\psi: B \rightarrow P$ as \bas \psi (z, y, t)
\in \Hom_{alg}(\h, \scrA[t, t^{-1}]) = G(\scrA[t^{-1},t])\; .\eas The
renormalization group acts, $\C^\times$ acts on $G(\scrA[t^{-1}, t])$,
the sections of $P\rightarrow B$, as \bas \C^\times \times
G(\scrA[t^{-1}, t]) &\rightarrow G(\scrA[t^{-1}, t]) \\ (t,
\psi(z,y, u)) &\mapsto \psi(z,y,tu) \;. \eas The bundle
$P\rightarrow B$ is not $\C^\times$ equivariant. Each one parameter diffeomorphism of $G(\scrA)$ defines a $\C^\times$ subbundle of $P$.

\begin{theorem}
For every one parameter diffeomorphism $\sigma$, there is a
$\C^\times$ equivariant $G$ principal bundle $K_\sigma \rightarrow B$
with sections corresponding to the curves $\sigma_t(\phi(z,y))$ for
$\varphi(z,y) \in G(\scrA)$.
\end{theorem}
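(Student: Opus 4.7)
The plan is to construct $K_\sigma$ by equipping the underlying total space of $P$ with a twisted $\C^\times$-action derived from $\sigma$, so that the distinguished (namely $\C^\times$-equivariant) sections are precisely the one-parameter curves $\sigma_t(\varphi(z,y))$.

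First I would take $K_\sigma$ to have the same total space and $G$-principal bundle structure as $P$: concretely $G\times \Spec \scrA \times \C^\times \to B$, with the obvious projection and the standard right $G$-action on the fiber. Nothing is to be done here; this immediately realizes $K_\sigma$ inside $P$ in the ``subbundle'' sense mentioned in the paragraph before the theorem, and it makes $K_\sigma \to B$ a $G$-principal bundle.

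Next I would define the $\C^\times$-action. Its defining property should be that, on a section of the form $\sigma_\bullet(\varphi)$ with $\varphi\in G(\scrA)$, the action of $s\in \C^\times$ produces the section $t\mapsto \sigma_{st}(\varphi)$. Because $\sigma$ is a one-parameter family, $\sigma_{st} = \sigma_s \circ \sigma_t$, which makes this a well-defined group action covering the base action $(z,t)\mapsto (z,st)$ on $B$. To globalize this to the total space I would first lift $\sigma_s$ to a bundle automorphism of $K\to \Spec \scrA$ commuting with the right $G$-action, which is possible because $\sigma_s$ is a diffeomorphism of $G(\scrA) = \Hom_{alg}(\h,\scrA)$ and so respects the $G$-torsor structure on fibers, and then compose with the shift $t\mapsto st$ on the $\C^\times$ factor. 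This yields a $\C^\times$-action on $K_\sigma$ that commutes with the $G$-action, giving the required $\C^\times$-equivariance.

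Finally I would verify the claimed correspondence. A $\C^\times$-equivariant section $\psi$ of $K_\sigma$ must satisfy $\psi(z,y,st) = s\cdot \psi(z,y,t)$; evaluating at $t=1$ exhibits $\psi(z,y,s) = \sigma_s(\varphi)(z,y)$ with $\varphi := \psi(z,y,1)\in G(\scrA)$. Conversely every $\varphi\in G(\scrA)$ produces an equivariant section by the same formula, giving a bijection between the curves $\sigma_t(\varphi(z,y))$ and the equivariant sections of $K_\sigma$.

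The main obstacle is the middle step: lifting $\sigma$, which a priori is only specified as a diffeomorphism of the set $G(\scrA)$, to a genuinely fiberwise action on the total space of $K$ that commutes with the right $G$-action. For an arbitrary one-parameter family this need not exist, but for the renormalization-group actions arising from rescaling the energy variable as in \eqref{scalingdr} and \eqref{scalingmc} the lift can be read off the explicit scaling rule; more generally this compatibility should either be recorded as a standing hypothesis on admissible $\sigma$'s or verified case by case from the scaling formula for $\varphi_{dr}$ and $\varphi_{mc}$.
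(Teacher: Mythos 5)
Your construction is essentially the paper's: $K_\sigma$ is the trivial $\C^\times$ bundle over $K$ with the twisted action $(t,(u,z,y,\varphi(z,y))) \mapsto (tu,z,y,\sigma_t(\varphi(z,y)))$, then reinterpreted as a $G$ principal bundle over $B$ whose distinguished ($\C^\times$-compatible) sections are exactly the curves $\sigma_t(\varphi(z,y))$. The lifting subtlety you flag at the end is a fair observation but does not separate your argument from the paper's, since the paper likewise defines the action only on points of the form $(u,z,y,\varphi(z,y))$ with $\varphi \in G(\scrA)$ --- that is, entirely at the level of sections --- which is all that is used in the sequel.
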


\begin{proof} 
Let $K_\sigma$ be the trivial $\C^\times$ bundle over $K$ defined by
the one parameter diffeomorphism $\sigma$. This is a $\C^\times$
equivariant bundle over $K$. For $\varphi(z, y) \in G(\scrA)$,
$(u, z, y, \varphi(z, y)) \in K_\sigma$, and $t \in \C^\times$, \bas
\sigma : \C^\times \times K_\sigma &\rightarrow K_\sigma \\ (t, (u, z,
y, \varphi(z,y))) &\mapsto (tu, z, y, \sigma_t(\varphi(z,y))) \;.\eas
The one parameter diffeomorphism $\sigma$ defines a 
curve $\varphi_\sigma(z,y,t)$ in $K_\sigma \rightarrow K$. 

I can instead view $K_\sigma$ as a $G$ principal bundle over $B$. The
sections of this bundle are of the form \bas (t, \varphi(z, y)) :
B\rightarrow K_\sigma \;. \eas These sections are compatible with the
renormalization group action defined by $\sigma$.  For $u \in
\C^\times$, \bas \sigma_u(t, \varphi(z,y)) = (ut,
\sigma_u(\varphi(z,y))).\eas Therefore the bundle $K_\sigma
\rightarrow B$ is $\C^\times$ equivariant.  This is the desired
construction.
\end{proof} 

The sections of $K_\sigma \rightarrow B$ form the group
$\tilde{G}_\sigma(\scrA) := \C^\times \rtimes_\sigma G(\scrA)$ defined
by the semi-direct product of $G(\scrA)$ with $\C^\times$ under the
action $\sigma$. There is a bundle injection \bas \xymatrix{ K_\sigma
  \ar@{^{(}->}[rr]^{i_\sigma} \ar[dr]_\pi & &P \ar[dl]^\pi \\ & B & }
\eas defined on sections of the bundles as \bas i_\sigma: \C^\times
\times G(\scrA) &\rightarrow G(\scrA[t^{-1}, t]) \\ (t, \varphi(z, y))
& \mapsto \sigma_t(\varphi(z, y)) \; .\eas The injection fixes a unit
for the action of the renormalization group, mapping \bas i_\sigma(1,
\varphi(z,y)) = \sigma_1\varphi(z, y, 1): = \psi(z,y,1) \;.\eas In this manner
\emph{all} possible one parameter diffeomorphisms on
$G(\scrA)$ can be represented as sections of $P \rightarrow B$.

%Infact, $K_\sigma$ is a trivialization of the bundle $P$ as
%a $\C^\times$ bundle over $K$ by the renormalization group
%action $\sigma$.

\begin{theorem}
Let $\sigma$ be a one parameter diffeomorphism on $G(\scrA)$ that
corresponds to a renormalization group action on a QFT. The geometric
$\beta$ function of $\sigma$ is the vector field $\beta_\sigma \in
TG(\scrA)$ defined by the logarithmic differential on $K$, \bas
\beta_\sigma(\varphi(z,y)) = \left[\sigma_t(\varphi(z, y))^{-1} \star
  \dt \sigma_t(\varphi(z,y))\right]|_{t=1}\;.\eas This vector field
generates the one parameter diffeomorphism on $G(\scrA)$, $\sigma$.
 \end{theorem}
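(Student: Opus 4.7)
The plan is to prove the theorem in two conceptual steps. First I verify that the formula $\beta_\sigma(\varphi) = \bigl[\sigma_t(\varphi)^{-1} \star \dt\,\sigma_t(\varphi)\bigr]\big|_{t=1}$ lands in $\g(\scrA) = T_e G(\scrA)$ for each $\varphi \in G(\scrA)$, so that via left translation $\beta_\sigma$ defines an honest vector field on $G(\scrA)$. Second, I exploit the one-parameter group law of $\sigma$ to identify the orbits $t \mapsto \sigma_t(\varphi)$ with the integral curves of this vector field through $\varphi$.

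For the first step, I would observe that the $\scrA[t,t^{-1}]$-valued curve $t \mapsto \varphi^{-1} \star \sigma_t(\varphi)$ takes values in $G(\scrA[t,t^{-1}])$ and passes through the identity $e = \eta_\scrA \circ \epsilon$ at $t=1$. Its logarithmic derivative at $t=1$, namely $\beta_\sigma(\varphi)$, is therefore a tangent vector at $e$, i.e.\ an element of $\g(\scrA)$. Equivalently, differentiating the algebra-homomorphism identity $\sigma_t(\varphi)(ab) = \sigma_t(\varphi)(a)\,\sigma_t(\varphi)(b)$ at $t=1$ and convolving on the left with $\varphi^{-1}$ yields the infinitesimal-character Leibniz identity $\beta_\sigma(\varphi)(ab) = \epsilon(a)\beta_\sigma(\varphi)(b) + \beta_\sigma(\varphi)(a)\epsilon(b)$ that characterizes $\g(\scrA)$.

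For the second step, I would use the group law $\sigma_t \circ \sigma_u = \sigma_{tu}$. Fix $\varphi \in G(\scrA)$ and set $\gamma(u) = \sigma_{e^u}(\varphi)$. Writing $\sigma_{e^{u+s}}(\varphi) = \sigma_{e^s}(\gamma(u))$ and applying $\dt$ at $t=e^u$ (equivalently $\partial_s$ at $s=0$), the chain rule together with the defining formula for $\beta_\sigma$ evaluated at $\gamma(u)$ yields
\[
\dt\,\sigma_t(\varphi)\big|_{t=e^u} \;=\; \gamma(u) \star \beta_\sigma(\gamma(u)) \,.
\]
This is precisely the flow equation for the vector field $\psi \mapsto \psi \star \beta_\sigma(\psi)$ with initial condition $\gamma(0) = \varphi$, so the orbit $u \mapsto \gamma(u)$ is its integral curve. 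Uniqueness of solutions holds grading-by-grading in the loop number: because $\h$ is connected graded with finite-dimensional graded pieces, the ODE is lower-triangular and can be solved recursively, so $\beta_\sigma$ determines and is determined by $\sigma_t$.

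The main obstacle will be the first step: carefully verifying that $\beta_\sigma(\varphi)$ really lies in the Lie algebra $\g(\scrA)$ of $\scrA$-valued infinitesimal characters. Since $\scrA = \C[z^{-1}][[z]][[y]]/(e^y - zm)$ is a formal commutative algebra with a nontrivial relation rather than a field, one must check that formal differentiation in $t$ commutes with evaluation on $\h$ and descends modulo $e^y - zm$. Both reduce to the observation that $\dt$ acts coefficient-wise on $\scrA[t,t^{-1}]$-valued sections and that the relation $e^y - zm$ is $t$-independent, so the argument goes through without essential difficulty.
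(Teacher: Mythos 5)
Your proposal is correct and follows essentially the same route as the paper: the core step is the same differentiate-the-multiplicativity-and-convolve computation showing $\beta_\sigma(\varphi)(ab)=\beta_\sigma(\varphi)(a)\epsilon(b)+\epsilon(a)\beta_\sigma(\varphi)(b)$, which the paper carries out explicitly in Sweedler notation. Your second step (the flow equation $\dt\,\sigma_t(\varphi)|_{t=e^u}=\gamma(u)\star\beta_\sigma(\gamma(u))$ with grading-by-grading uniqueness) is actually more detailed than the paper's one-line ``comes from definition,'' and matches the recursive argument the paper defers to its subsequent lemma on the inverse map $\rho_\sigma$.
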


\begin{proof}
First I check that $\beta_\sigma \in TG(\scrA)$ by verifying that
$\beta_\sigma (\varphi) \in \g(\scrA)$ for all $\varphi(z,y) \in
G(\scrA)$. For $a,b \in \h$, I write $\Delta(a) = \sum_{(a)}a' \otimes
a''$.  Then \bas \beta_\sigma(\varphi)(ab) = \varphi(z, y)^{-1} \star
\dt \sigma_t\varphi(z, y)|_{t=1} (ab) = \\ \sum_{(a)(b)}\varphi(z,
y)^{-1} (a')\varphi(z, y)^{-1} (b') \dt \left(\sigma_t\varphi(z, y)
(a'')\sigma_t\varphi(z, y) (b''))\right)|_{t=1}\;. \eas by definition
of the coproduct. Calculating the derivative gives \bas
\sum_{(a)(b)}\varphi(z, y)^{-1} (a')\varphi(z, y)^{-1} (b') \times
\\ \left[ (\dt \sigma_t\varphi(z, y) (a''))|_{t=1}\varphi(z, y) (b'')
  + \varphi(z, y) (a'')\dt \sigma_t\varphi(z, y) (b''))|_{t=1} 
  \right] \eas which rearranges to \bas \sum_{(a)(b)}\varphi(z,
y)^{-1} (a')(\dt \sigma_t \varphi(z, y) (a''))|_{t=1} \varphi(z,
y)^{-1} (b') \varphi(z, y) (b'') +\\ \varphi(z, y)^{-1}
(a')\varphi(z, y) (a'') \varphi(z, y)^{-1} (b') (\dt\varphi(z,
y) (b''))|_{t=1}  \\ = \beta_\sigma(\varphi)(a) \varepsilon(b)
+\varepsilon(a) \dt \beta_\sigma (\varphi)(b)\;. \eas Thus
$\beta(\varphi) \in \g(\scrA)$. The second statement comes from
definition.
\end{proof}

It remains to check that $\beta_\sigma$ defines a one to one
correspondence between $G(\scrA)$ and $\g(\scrA)$ for each $\sigma$.

\begin{definition}
The renormalization group action of $\sigma$ on $G(\scrA)$ induces a
diffeomorphism on $\g(\scrA)$. Let $\alpha_\sigma(z,y,t)$ be a the one
parameter path through $\alpha \in \g(\scrA)$ induced by $\sigma$,
such that $\alpha_\sigma(z,y,1) = \alpha$.
\end{definition}

\begin{lemma}
The $\beta$ function on the group action $\sigma$ defines a set
bijection\bas \beta_\sigma: G(\scrA) \rightarrow
\g(\scrA)\;\;. \eas \label{betainv} \end{lemma}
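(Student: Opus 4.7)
The plan is to construct a two-sided inverse $\beta_\sigma^{-1}: \g(\scrA) \to G(\scrA)$ by integrating a left-invariant convolution-product ODE, and then invoke uniqueness of solutions to establish the bijection. The pro-nilpotent structure of $G(\scrA)$ coming from the loop grading of the connected Hopf algebra $\h$ is what makes the construction work.

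Given $\alpha \in \g(\scrA)$, I would take the induced one-parameter path $\alpha_\sigma(z,y,t)$ with $\alpha_\sigma(z,y,1) = \alpha$ and solve the left-invariant ODE
\[
\dt \psi_t = \psi_t \star \alpha_\sigma(z,y,t)
\]
by a time-ordered exponential (Dyson-type Picard iteration). Because $\h = \bigoplus_n \h_n$ is graded connected with each $\h_n$ finite-dimensional, the iteration closes grade by grade and converges in the $\h$-adic topology, producing a unique solution curve $\psi_t$; since each $\alpha_\sigma(z,y,t)$ is an infinitesimal character, the Leibniz rule forces $\psi_t$ to remain a character, so $\psi_t \in G(\scrA)$ for all $t$. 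I would then define $\beta_\sigma^{-1}(\alpha)$ to be the unique $\varphi \in G(\scrA)$ whose $\sigma$-orbit agrees with $\psi_t$, normalized so that $\sigma_1(\varphi) = \psi_1$.

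Checking $\beta_\sigma \circ \beta_\sigma^{-1} = \id$ is immediate: the logarithmic derivative of the solution $\psi_t$ at $t=1$ is $\alpha_\sigma(z,y,1) = \alpha$. For the other composition, given $\varphi \in G(\scrA)$ and $\alpha := \beta_\sigma(\varphi)$, the orbit $t \mapsto \sigma_t(\varphi)$ satisfies the \emph{same} ODE with the \emph{same} generator $\alpha_\sigma(z,y,t)$---this is precisely the defining property of the induced path in the preceding definition---so uniqueness of the ODE solution forces $\sigma_t(\varphi) = \psi_t$, hence $\varphi = \beta_\sigma^{-1}(\alpha)$.

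The main obstacle I expect is making this reconstruction genuinely rigorous on two points. First, the Picard iteration must be shown to converge in the projective-limit topology on $G(\scrA)$ and to preserve the character property at every step; this uses that $\alpha_\sigma(z,y,t)$ is a primitive element of $\h^*$ for every $t$, so that $\star$-multiplication of the iterate by $\alpha_\sigma$ lands back in $G(\scrA)$ at each order. Second, one must verify that the assignment $\alpha \mapsto \alpha_\sigma(z,y,\cdot)$ is itself a well-defined bijection between $\g(\scrA)$ and the space of admissible one-parameter generators, so that no information is lost in translating between the endpoint value $\alpha$ and the full curve $\alpha_\sigma$. Once these two technical points are in hand, existence and uniqueness of the time-ordered exponential produce the bijection cleanly.
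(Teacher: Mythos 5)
Your proposal follows essentially the same route as the paper: the inverse is constructed by solving the left-invariant ODE $\dt\psi_t = \psi_t \star \alpha_\sigma(z,y,t)$ grade by grade using the connected grading of $\h$ (your Picard/time-ordered exponential is exactly the paper's recursive integral formula on $\h_n$), and the character property is verified via the Leibniz/coproduct argument on products of indecomposables. The two technical caveats you flag---well-definedness of $\alpha \mapsto \alpha_\sigma$ and preservation of the character property at each order---are precisely the points the paper addresses (the latter explicitly, the former only implicitly), so your plan is sound and matches the paper's proof.
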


This is just an extension of the proof of the similar statement in
\cite{EM, thesis1}.
\begin{proof}
We have shown that $\beta_\sigma(\varphi) \in \g(\scrA)$ for all
$\varphi \in G(\scrA)$.

To see that this is a bijection, I define an inverse function,
$\rho_\sigma$, and check that for all $\alpha \in \g(\scrA)$,
$\rho_\sigma(\alpha)$ is well defined and in $G(\scrA)$.

For any $\alpha \in \g(\scrA)$, and one parameter diffeomorphism $\sigma$, define a map $\rho_\sigma$, \bas
\rho_\sigma : \g(\scrA) \rightarrow G(\scrA) \eas recursively by
\ba \dt \left(\rho_\sigma(\alpha_\sigma(z,y,t)\right) =
\rho_\sigma(\alpha_\sigma(z,y,t)) \star
\alpha_\sigma(z,y,t) \label{recursiveinverse} \ea with the initial
condition that \bas \rho_\sigma(t)(\alpha_\sigma(z,y,t)) (1) = 1 \quad
\forall t \in \C^\times \;. \eas It is easy to check that \bas
\rho_\sigma(1)(\beta_\sigma(\varphi)) = \varphi \; ,\eas for $\varphi
\in G(\scrA)$. If it is well defined with appropriate domain, it is
the desired inverse function.

It remains to check that $\rho_\sigma$ is well defined for all
$\alpha \in \g(\scrA)$ and has a domain contained in $G(\scrA)$. This
I do recursively on the grading on $\h$. I use the notation \bas
\Delta(a) = 1\otimes a + a \otimes 1 + \sum_{(a)}a' \otimes a''
\;.\eas To ease notation, let $\psi_\sigma(z,y,t) \in\{
\rho_\sigma(\alpha_\sigma(z,y,t))\}$. If $a \in \h_0$, both sides of
\eqref{recursiveinverse} are $0$. If $a \in \h_1$,
\eqref{recursiveinverse} becomes \bas \psi_\sigma(z,y,t_0)(a) =
\int_0^{t_0}\frac{\alpha_\sigma(z,y,t)(a)}{t} dt \;. \eas Therefore,
$\psi_\sigma(z,y,t)$ is well defined on $\h_1$. By
induction, for $a \in \h_n$, \eqref{recursiveinverse} becomes \bas
\psi_\sigma(z,y,t_0)(a) = \int_0^{t_0}\frac{\alpha_\sigma(z,y,t)(a) +
  \sum_{(a)}\psi_\sigma(z,y,t)(a')\alpha_\sigma(z,y,t)(a'')}{t} dt
\;. \eas Since $Y(a'), Y(a'') < n$, the kernel of the integral is known by
induction, and $\psi_\sigma(z,y,t)$ is unique.

To see that $\psi_\sigma(z,y,t) \in G(\scrA)$, consider by induction
on loop number its action on composite elements $ab \in \h$, where
$a, \;b\in \h$  are indecomposables. For $a, b \in \h_1$, \bas \dt
\psi_\sigma(z,y,t)(ab) = \psi_\sigma(z,y,t)(b)\alpha_\sigma(z,y,t)(a)
+ \psi_\sigma(z,y,t)(a)\alpha_\sigma(z,y,t)(b)
\\ \psi_\sigma(z,y,t)(b) \dt\psi_\sigma(z,y,t)(a) +
\psi_\sigma(z,y,t)(a) \dt\psi_\sigma(z,y,t)(b) = \dt
(\psi_\sigma(z,y,t)(a) \psi_\sigma(z,y,t)(b) \;.\eas For general
indecomposables $a, \;b\in \h$, \bas \dt \psi_\sigma(z,y,t)(ab) =
\psi_\sigma(z,y,t)(b)\alpha_\sigma(z,y,t)(a) +
\psi_\sigma(z,y,t)(a)\alpha_\sigma(z,y,t)(b) +
\\ \sum_{(a)}\psi_\sigma(z,y,t)(ba')\alpha_\sigma(z,y,t)(a'') +
\sum_{(b)}\psi_\sigma(z,y,t)(ab')\alpha_\sigma(z,y,t)(b'') \\ =
\psi_\sigma(z,y,t)(b)\alpha_\sigma(z,y,t)(a) +
\psi_\sigma(z,y,t)(a)\alpha_\sigma(z,y,t)(b) +\\ \psi_\sigma(z,y,t)(b)
\sum_{(a)} \psi_\sigma(z,y,t)(a' )\alpha_\sigma(z,y,t)(a'') +
\psi_\sigma(z,y,t)(a) \sum_{(b)} \psi_\sigma(z,y,t)(b')
\alpha_\sigma(z,y,t)(b'') \;.\eas The second equality comes from the
induction step, since $Y(ab') < Y(ab)$ and $Y(ba') < Y(ab)$. Thus for
all $a, b \in \h$, \bas \psi_\sigma(z,y, t) (ab) = \psi_\sigma(z,y, t)
(a) \psi_\sigma(z,y, t) (b) \eas showing that it is in $G(\scrA)$ as
desired.

\end{proof}

The geometric $\beta$ function and the perturbative $\beta$ function
for a theory are related objects. The geometric object $\beta_\sigma$
is the generator for a specified one parameter family of
diffeomorphisms $\sigma$. The perturbative
$\beta$ function, on the other hand, calculates the scale dependence
of a regularized QFT.

Recall that regularized Lagrangians are no longer scale
invariant: \bas \int_{\R^n} \mathcal{L}(x, z, y) d^nx \neq
\int_{\R^n} \mathcal{L}(tx, z, y) d^n(tx) \; .\eas The action of the
renormalization group on a regularized Lagrangian is defined by the
scale dependence of the regularized theory. The $\beta$ function of a
QFT is measured (perturbatively) with respect to this action.

\begin{definition}
The perturbative $\beta$ function is the geometric $\beta$ function,
$\beta_\sigma(\varphi)$ defined on a pair $(\varphi, \sigma)$, where
$\varphi$ corresponds to the Feynman rules under an appropriate
regularization scheme, and the one parameter family of diffeomorphisms $\sigma$
is defined by the scale dependence of the Feynman rules introduced by
the regularization scheme. \end{definition}

I relate the perturbative $\beta$ functions for dimensional
regularization and momentum cut-off regularization under these
conventions.

The relevant renormalization group action for dimensional
regularization is \bas \sigma_{dr,t}\varphi(z, y) =
t^{zY}\varphi(z,y)\; ,\eas as defined by \eqref{scalingdr}. If $t =
e^s$, for $s \in \C$, the relevant renormalization group action for
momentum cut-off is \bas \sigma_{mc,t}\varphi(z, y) = \varphi(tz_0,
y_0 + s)\; ,\eas for some fixed $z_0$ where $e^{y_0} = z_0$ as defined
by \eqref{scalingmc}.

\begin{remark}
In this notation, the $\beta$ function for dimensional regularization
is \bas \beta_{\sigma_{dr}}(\varphi_{dr}) = \varphi_{dr}^{-1} \star
zY\varphi_{dr} \;, \eas which is actually $z \beta$, where $\beta$ is
the relevant $\beta$ function defined in \cite{CK01}. It is further
worth noting that due to the form of the renormalization group action
for dimensional regularization, $\beta_{\sigma_{dr}}$ defines a set
bijection between $G(\scrA)$ and $\g(\scrA)$ \cite{EM}. This bijection
does not exist for all $\beta_\sigma$.
\end{remark}

The fiber over every point $(z, y, \varphi) \in K$ in $K_\sigma
\rightarrow K$ represents the energy scale of the character. The
initial energy scale of which an physical theory is evaluated
corresponds to a section $\varphi(z, y, 1)$.  Two different sections
$\varphi(z, y)$ and $\eta(z,y) \in G(\scrA)$ represent the same
character at different energy scales if there exists a one parameter
family of diffeomorphisms $\sigma$ and a $t \in \C^\times$ such that
\bas \sigma_t \varphi(z, y) = \eta(z, y)\;. \eas I show this
explicitly in the case of momentum cut-off regularization and
dimensional regularization.

Fix the regulators $z_0$ and $z_o'$. For momentum cut-off
regularization, write $z(t) = z_0t$ and $ z'(t) = z_0't $, where $z_0
= u z_0'$ and $u = e^v$. Let the character $\varphi_{mc, z_0}(z(t),
y(s))$, correspond to the Feynman rules cut off at the momentum
$\frac{1}{z(t)}$ at the energy scale $\frac{1}{z_0}$. Then \bas
\sigma_{mc,u}\varphi_{mc, z'_0}(z'(t), y'(s)) =\varphi_{mc,
  z_0}(z'(tu), y'(s+v)) = \varphi_{mc, z_0}(z(t), y(s))\; .\eas

In the case of dimensional regularization, write the character
associated to the field theory evaluated at the energy scale $u$ as
$\varphi_{dr,u}(z)$. The action of the renormalization group is given
by \bas \sigma_{dr, t} \varphi_{dr,u/t}(z) = t^{zY}\varphi_{dr,u/t}(z)
= \varphi_{dr,u}(z).\eas The subtlety of keeping track of the
renormalization scale is a minor point in the case of dimensional
regularization because of the independence of the regulator and the
energy scale. The notation keeping track of the energy scale is
dropped, but it is implicit in the definition of equisingularity in
\cite{CM06}.

The perturbative $\beta$ functions for dimensional regularization and
momentum cut-off regularization are defined by the vector fields
$\beta_{\sigma_{dr}}$ and $\beta_{\sigma_{mc}}$
respectively. Evaluating them on characters $\varphi_{dr,u}(z)$ and
$\varphi_{mc, z_0}(z(t), y(s))$ respectively gives the corresponding perturbative $\beta$
function.  For dimensional regularization, \bas
\beta_{\sigma_{dr}}(\varphi_{dr,u}(z)) = \varphi^{-1}_{dr,u}(z) \star
\dt t^{zY}\varphi_{dr,u}(z) |_{t=1} = z\varphi^{-1}_{dr,u}(z) \star Y
\varphi_{dr,u}(z)\;. \eas This vector field is a function of the
regulator $z$, the complex dimension. That is $
\beta_{\sigma_{dr}}(\varphi(z)) \in \g(\scrA)$ for all $\varphi \in
G(\scrA)$. The physical $\beta$ function is only interesting at
integer dimension, when $z=0$. Evaluated at $z=0$, \bas
\lim_{z\rightarrow 0} \beta_{\sigma_{dr}} (\varphi_{dr,u}(z)) \in
\g(\C)\eas is the $\beta$ function defined in \cite{CK01, CM06, EM,
  thesis1}. The limit is well defined because $\varphi_{dr,u}$ has
local counterterms.

For momentum cutoff regularization, the physical $\beta$ function is
\bas \beta_{\sigma_{mc}}(\varphi_{mc, z_0}(z(t), y(s))) =
\varphi^{-1}_{mc,z_0}(z_0, y_0) \star \dt \varphi_{mc,z_0}(z(t), y(s))
|_{t=1} \in \g(\C)\; .\eas It is a constant valued element of the Lie
algebra.

\subsection{A connection on the renormalization bundle}

In this section I define a connection on $B$ defined by the
logarithmic differential of sections $\varphi(z,y,t): B \rightarrow
P$.  Following \cite{CM06}, I call this connection
$\varphi^*\omega$. Following \cite{thesis1}, I show that these
connections are defined by $\beta_\sigma$, . The physical $\beta$
functions for a regularization scheme appear as pullbacks along
specific sections.  This brings me to the main theorem of the paper,
which I prove at the end of this section.

\begin{theorem}
The two physical $\beta$ functions
$\beta_{\sigma_{dr}}(\varphi_{dr,u}(z, y))$ and
$\beta_{\sigma_{mc}}(\varphi_{mc, z_0}(z(t),y(s))$ can be related by a
gauge transformation on the bundle $P \rightarrow B$. \end{theorem}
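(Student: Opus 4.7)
The plan is to exhibit both physical $\beta$-functions as pullbacks of the same canonical Maurer--Cartan connection on $P\to B$ along two different sections, and then to show that these sections differ by multiplication by an element of $G(\scrA[t^{-1},t])$ which is precisely a gauge transformation of $P$.

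First, I would regard both $\varphi_{dr,u}(z,y)$ and $\varphi_{mc,z_0}(z(t),y(s))$ as sections of $P\to B$ via the embeddings $i_{\sigma_{dr}}$ and $i_{\sigma_{mc}}$ constructed in the previous subsection. In this picture the physical $\beta$-functions $\beta_{\sigma_{dr}}(\varphi_{dr,u})$ and $\beta_{\sigma_{mc}}(\varphi_{mc,z_0})$ coincide with the $\dt$-components of the logarithmic differentials of these sections evaluated at $t=1$, i.e.\ with pullbacks of the canonical connection $\omega$ on $P$.

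Second, I would set
\bas g(z,y,t) := \varphi_{mc,z_0}(z(t),y(s))\star \varphi_{dr,u}(z,y)^{\star -1} \in G(\scrA[t^{-1},t])\;, \eas
which is well defined because $G(\scrA[t^{-1},t])$ is a group under convolution. Left convolution by $g$ is the standard action of a gauge transformation on sections of a principal $G$-bundle, and by construction $\varphi_{mc,z_0} = g\star \varphi_{dr,u}$ as sections of $P\to B$.

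Third, I would apply the product rule for logarithmic differentiation in the Hopf-algebraic group: using $(g\star\varphi)^{\star -1} = \varphi^{\star -1}\star g^{\star -1}$ together with the Leibniz rule for $\dt$,
\bas (g\star\varphi_{dr,u})^{\star -1}\star \dt (g\star\varphi_{dr,u}) = \varphi_{dr,u}^{\star -1}\star \bigl( g^{\star -1}\star \dt g\bigr) \star \varphi_{dr,u} + \varphi_{dr,u}^{\star -1}\star \dt \varphi_{dr,u}\;. \eas
Evaluating at $t=1$ shows that $\beta_{\sigma_{mc}}(\varphi_{mc,z_0})$ and $\beta_{\sigma_{dr}}(\varphi_{dr,u})$ differ by the conjugation of the Maurer--Cartan element $(g^{\star -1}\star \dt g)|_{t=1}$ by $\varphi_{dr,u}$, which is exactly the transformation rule for a connection on $P\to B$ under a gauge transformation.

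The main obstacle is to verify that the candidate $g$ is a well-defined algebra homomorphism and that it intertwines the two renormalization-group actions $\sigma_{dr}$ and $\sigma_{mc}$ in a manner compatible with the equivariant structures on $K_{\sigma_{dr}}$ and $K_{\sigma_{mc}}$. This rests on the locality of counterterms and the finiteness of the renormalized parts established earlier for both regularization schemes, together with Lemma \ref{betainv}, which guarantees that $\beta_\sigma$ is a set bijection $G(\scrA)\to \g(\scrA)$ and hence that the identification of the two $\beta$-functions via $g$ is unambiguous.
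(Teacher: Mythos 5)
Your proposal is correct and follows essentially the same route as the paper: both physical $\beta$ functions are exhibited as ($\dt$-components of) pullbacks of the single logarithmic-differential connection $\omega$ on $P \rightarrow B$ along the sections $i_{\sigma_{dr}}(\varphi_{dr,u})$ and $i_{\sigma_{mc}}(\varphi_{mc,z_0})$, and the two pullbacks are then related by the standard cocycle identity for $D$ under convolution by $g = \varphi_{mc,z_0}\star\varphi_{dr,u}^{\star -1}$, which is exactly the gauge-transformation formula the paper invokes via Lemma \ref{logmult}. The only cosmetic difference is that you package the transformation as left convolution by an explicit $g$ while the paper writes the identity directly for $D(\varphi_{dr,u}\star\varphi_{mc,z_0})$; your worry about $g$ being a character is resolved by the commutativity of $\h$, which makes $G(\scrA[t^{-1},t])$ a genuine group under $\star$.
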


\begin{definition} Let $D$ be a differential operator.
\bas D:G(\scrA[t^{-1},t]) &\rightarrow
\Omega^1(\g(\scrA[t^{-1},t])) \\ \varphi (z, y,t) &\mapsto
\varphi(z,y,t)^{\star -1}\star d(\varphi( z,y,t))\;. \eas
\end{definition}

Let $\omega \in \Omega^1(P, \g)$ be a connection on $P$ defined
locally by the differential operator $D$ defined above.

\begin{lemma}For $f \in G(\scrA[t^{-1}, t])$, the differential
$D(f) = f^\star \omega$ defines a connection on
  section $f$ of $P \rightarrow B$.
\label{logmult}\end{lemma}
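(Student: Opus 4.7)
The plan is to imitate the argument used earlier in the paper to verify that $\beta_\sigma(\varphi) \in \g(\scrA)$. The main content of the lemma is that the logarithmic derivative $D(f) = f^{\star-1}\star df$ really takes values in the Lie algebra $\g(\scrA[t^{-1},t])$ rather than just in linear maps $\Hom(\h, \scrA[t^{-1},t])$, so that it can legitimately be identified with the pullback of a $\g$-valued connection form on $P$.

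First I would check that $D(f)$ is an infinitesimal character, i.e.\ that for all $a,b \in \h$,
\bas
D(f)(ab) = D(f)(a)\,\epsilon(b) + \epsilon(a)\,D(f)(b)\;.
\eas
Writing $\Delta(a) = \sum_{(a)} a'\otimes a''$ and using that $f$ is an algebra homomorphism, expand $d(f(ab))$ via the Leibniz rule and the fact that $\Delta$ is an algebra morphism. Reorganising the result by pulling $f^{\star -1}\star f = e$ through the Sweedler sum (exactly as in the proof that $\beta_\sigma(\varphi)\in\g(\scrA)$) collapses all but the two surviving terms, giving the derivation property above. Hence $D(f)$ is a $\g(\scrA[t^{-1},t])$-valued $1$-form on $B$.

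Next I would verify that $D(f)$ genuinely arises as the pullback $f^*\omega$ of a connection form on the trivial principal $G$-bundle $P \to B$. Since $P\simeq K\times\C^\times$ is trivial, a principal connection is determined by its restriction to any global section, and this restriction is required to transform under a change of section $f\mapsto f\star g$ (with $g: B\to G$) by the Maurer--Cartan rule
\bas
D(f\star g) = g^{\star -1}\star D(f)\star g + g^{\star -1}\star dg\;.
\eas
This identity follows directly by differentiating the convolution product $f\star g$ and using the derivation property just established; in particular the second term matches the fundamental vector field contribution required by the defining axioms of a principal connection.

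The only real obstacle is ensuring that the Sweedler bookkeeping for $\Delta(ab)$ genuinely cancels cross-terms so that $D(f)$ restricts to an element of $\g(\scrA[t^{-1},t])$, rather than merely an element of $\Hom(\h,\scrA[t^{-1},t])$; once this is in hand, the gauge-transformation identity above, together with triviality of the bundle, immediately identifies $D(f)$ with the local expression $f^*\omega$ for a well-defined connection $\omega$ on $P\to B$.
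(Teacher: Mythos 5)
Your proof is correct and its key step --- the change-of-section identity $D(f\star g)=g^{\star -1}\star D(f)\star g+g^{\star -1}\star dg$ --- is exactly the computation the paper performs, which writes it for $f^{\star -1}\star g$ and derives it from $df^{\star -1}=-f^{\star -1}\star df\star f^{\star -1}$ to verify the pullback condition for a connection. Your additional check that $D(f)$ is actually $\g(\scrA[t^{-1},t])$-valued, carried out by the same Sweedler-notation argument used earlier to show $\beta_\sigma(\varphi)\in\g(\scrA)$, is a sound supplement that the paper's proof leaves implicit.
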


\begin{proof}
If $D$ defines a connection, it must satisfy equation \ba (f^{\star
  -1} \star g)^*\omega = g^{-1}dg + g^{\star -1}(f^*\omega) g \;
,\label{pullbackcond}\ea for $f,\, g \in G(\scrA[t^{-1},t])$. Since
$df^{-1} =- f^{-1} df f^{-1}$, \bas D(f^{-1} g) = Dg - g^{-1}f
f^{-1}df f^{-1}g \;, \eas or \bas Dg = D(f^{-1} g) + (f^{-1}g)^{-1}Df
(f^{-1}g) \;.\eas which satisfies equation
\eqref{pullbackcond}.
\end{proof}

The objects of interest in this paper are the vector fields
$\beta_\sigma \in TG(\scrA)$. To define these, I define a related connection
on the bundle $K_\sigma$.

\begin{definition}
Let $\omega_\sigma \in \Omega^1(K_\sigma, \g)$ be the connection on
$K_\sigma$ defined $\omega_\sigma := i_\sigma^*\omega$.
\end{definition}

Viewing $K_\sigma$ as a $G$ principal bundle over $B$, one can write
the corresponding connection over $B$ as \bas (\varphi,
t)^*\omega_\sigma(z, y, t) = \varphi_\sigma^{-1}d (\varphi_\sigma)\;.\eas As
an element in $\Omega^1(B, \g)$, I write it as \bas
(\varphi,t)^*\omega_\sigma(z, y, t) = a_\varphi(z,y,t) dz +
b_\varphi(z,y,t) dy + c_\varphi(z, y, t) dt \; , \eas with \ba
 a_\varphi(z,y,t) = \sigma_t\varphi(z,y)^{ \star -1} \star
 \frac{\partial}{\partial z} \sigma_t \varphi (z,y)
 \label{adef}\\ b_\varphi(z,y,t) = \sigma_t\varphi(z,y)^{ \star -1} \star
 \frac{\partial}{\partial y} \sigma_t \varphi (z,y)\label{bdef}
 \\ c_\varphi(z,y,t) = \sigma_t\varphi(z,y)^{ \star -1} \star \dt
 \sigma_t \varphi (z,y)
\label{cdef} \;.\ea

\begin{theorem}
The connection $\omega_\sigma$ is $\C^\times$ equivariant for any
renormalization group action $\sigma$.
\end{theorem}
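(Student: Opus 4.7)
The plan is to verify $\Sigma_u^\ast \omega_\sigma = \omega_\sigma$ for every $u \in \C^\times$, where $\Sigma_u$ is the $\C^\times$-action built into $K_\sigma$ in the preceding construction, acting as $\Sigma_u(t, z, y, \varphi) = (ut, z, y, \sigma_u\varphi)$ and covering the translation $\psi_u : (z, y, t) \mapsto (z, y, ut)$ on the base $B$. Since a connection on a $G$-principal bundle is determined by its pullbacks along sections together with the standard $G$-equivariance, it suffices to test invariance on the natural sections $s_\varphi(z, y, t) = (t, z, y, \sigma_t\varphi(z,y))$ whose image under $i_\sigma$ recovers the curves $\sigma_t\varphi \in G(\scrA[t^{-1},t])$ considered earlier in the paper.

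The first step is the identity $\Sigma_u \circ s_\varphi = s_\varphi \circ \psi_u$, which follows immediately from the one-parameter group law $\sigma_u\sigma_t = \sigma_{ut}$: both maps send $(z,y,t)$ to $(ut, z, y, \sigma_{ut}\varphi(z,y))$. Pulling $\omega_\sigma$ back along both sides yields
\bas s_\varphi^\ast \Sigma_u^\ast \omega_\sigma \;=\; (\Sigma_u \circ s_\varphi)^\ast \omega_\sigma \;=\; (s_\varphi \circ \psi_u)^\ast \omega_\sigma \;=\; \psi_u^\ast\, s_\varphi^\ast \omega_\sigma \;, \eas
so the equivariance condition reduces to checking that the local connection 1-form $s_\varphi^\ast\omega_\sigma$ on $B$ transforms under $\psi_u^\ast$ in the manner demanded by the $\C^\times$-equivariant bundle structure.

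The second step is to exploit $\omega_\sigma = i_\sigma^\ast \omega$ together with the description of $P \to B$ given in the section. The renormalization-group action on sections of $P$ is translation in the $t$-coordinate, $\psi(z, y, v) \mapsto \psi(z, y, uv)$, under which the logarithmic differential $\psi^{\star -1} \star d\psi$ is manifestly invariant — this is the standard translation-invariance of the Maurer--Cartan form of $G$ along a one-parameter group. The embedding $i_\sigma$ intertwines $\Sigma_u$ with exactly this translation, which is the content of $i_\sigma(ut, z, y, \sigma_u\varphi) = \sigma_{ut}\varphi(z,y)$ as a function on $B$. Thus the invariance of $\omega$ under translation in $t$ descends through $i_\sigma^\ast$ to the invariance of $\omega_\sigma$ under $\Sigma_u$, using the formulas \eqref{adef}--\eqref{cdef} to check component-wise that the factor of $(\sigma_t\varphi)^{\star -1}$ cancels the analogous factor after the shift.

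The main obstacle is the bookkeeping needed to identify $K_\sigma$ and $P$ consistently: both are $G$-bundles over $B$, but the embedding $i_\sigma$ twists the $\C^\times$-coordinate by $\sigma$, and one must confirm that the $\C^\times$-action on $P$ under which $\omega$ is invariant is precisely the one intertwined with $\Sigma_u$. Once that identification is nailed down, the theorem reduces to the naturality of pullback plus the translation-invariance of the Maurer--Cartan form along the $\C^\times$-orbit generated by $\sigma$.
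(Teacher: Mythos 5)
Your proposal is correct and follows essentially the same route as the paper: both reduce the equivariance check to sections, use the one-parameter group law $\sigma_u\sigma_t=\sigma_{ut}$ together with the invariance of the scale derivative $\dt$ under $t\mapsto ut$, and then verify component-wise that $a_\varphi$, $b_\varphi$, $c_\varphi$ transform by $\sigma_u(\,\cdot\,)(z,y,t)=(\,\cdot\,)(z,y,tu)$. The extra bookkeeping you flag (identifying the $\C^\times$-action on $P$ intertwined by $i_\sigma$ with $\Sigma_u$) is handled only implicitly in the paper as well, so your write-up is, if anything, slightly more careful on that point.
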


\begin{proof}
This is a result of the bundle $K_\sigma$ being $\C^\times$
equivariant and that the derivative in $\C^\times$ is
exponential. Explicitly, one can check this on sections.

By equations \eqref{adef} and \eqref{bdef}, the coefficients
$a_\varphi$ and $b_\varphi$ involve derivatives with respect to $z$
and $y$ respectively. Therefore $\sigma_u(a_\varphi(z, y, t)) =
a_\varphi(z, y , tu) $ and $\sigma_u(b_\varphi(z, y, t) = b_\varphi(z,
y , tu))$. In the case of $c_\varphi$, \bas c_\varphi(z, y, tu) &=
\varphi_\sigma^{-1}(z, y,tu)\star t dt \varphi_\sigma(z, y,tu) \\ &=
\sigma_u\varphi_\sigma^{-1}(z, y, t) \star ut \partial_{tu}
\varphi_\sigma(z, y, tu)  \\&= \sigma_u(c_\varphi(z, y, t)) \eas
\end{proof}

By the $\C^\times$ equivariance, it is sufficient to study connections
of the form $\varphi_\sigma(z,y,1)^*\omega =
\varphi^*\omega_\sigma$. Notice that \bas c_\varphi(z, y, 1) =
\left[\sigma_t\varphi^{-1}(z, y) \star \dt
  \sigma_t\varphi(z, y)\right]|_{t=1} = \beta_\sigma(\varphi(z,y))\;
.\eas Furthermore, \ba \sigma_u\left(\beta_\sigma(\varphi(z,y))\right)
= \left[\sigma_t\varphi^{-1}(z, y) \star \dt
  \sigma_t\varphi(z, y)\right]|_{t=u} \;. \label{sigmabeta}\ea This,
combined with the fact that $\beta_\sigma$ is a bijection between
$G(\scrA)$ and $\g(A)$ gives the following:

\begin{theorem}
The connection $\omega_\sigma$ on $K_\sigma$ is defined by the vector field
generating the renormalization group action, $\beta_\sigma$.
\label{betadef}\end{theorem}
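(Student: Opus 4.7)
The plan is to show that the full connection one-form $\omega_\sigma$ can be reconstructed from the data of $\beta_\sigma$, using the bijection between $G(\scrA)$ and $\g(\scrA)$ established in Lemma \ref{betainv} together with the $\C^\times$ equivariance proved immediately before this theorem.

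First I would unpack what it means for $\omega_\sigma$ to be ``defined by'' $\beta_\sigma$: concretely, that the pullback $\varphi^*\omega_\sigma$ along any section is completely determined once we know the map $\varphi \mapsto \beta_\sigma(\varphi)$. The connection decomposes as $a_\varphi\,dz + b_\varphi\,dy + c_\varphi\,dt$, so I would handle the three coefficients in turn. The $dt$-coefficient is the easy one: at $t = 1$ the definition \eqref{cdef} gives $c_\varphi(z,y,1) = \beta_\sigma(\varphi(z,y))$ directly, and equation \eqref{sigmabeta} extends this to all $t = u$ via $c_\varphi(z,y,u) = \sigma_u(\beta_\sigma(\varphi(z,y)))$. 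By the $\C^\times$ equivariance theorem just proved, this value of $c_\varphi$ at arbitrary $t$ is forced.

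For the $dz$- and $dy$-coefficients, the key is Lemma \ref{betainv}: the map $\beta_\sigma: G(\scrA) \to \g(\scrA)$ is a bijection with recursive inverse $\rho_\sigma$. Therefore knowledge of $\beta_\sigma(\varphi)$ uniquely determines $\varphi(z,y)$ as an element of $G(\scrA)$, and hence also determines $\sigma_t\varphi(z,y)$ for all $t$. But $a_\varphi$ and $b_\varphi$ are defined in \eqref{adef} and \eqref{bdef} purely as convolution products of $\sigma_t\varphi$ with its $z$- and $y$-derivatives, so they are recovered from $\varphi$. Combining the three coefficients, the entire pullback connection $\varphi^*\omega_\sigma$ is reconstructable from $\beta_\sigma$, and by equivariance this extends to the full connection on $K_\sigma$.

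The main obstacle I anticipate is making precise the sense in which $\beta_\sigma$ ``defines'' $\omega_\sigma$ rather than just being one coefficient of it; the argument rests squarely on the bijectivity in Lemma \ref{betainv}, so care is needed to invoke that lemma in the form $\varphi = \rho_\sigma(\beta_\sigma(\varphi))$ and then feed the reconstructed $\varphi$ back into the definitions \eqref{adef}--\eqref{cdef}. The rest of the proof is essentially bookkeeping against the equivariance identity \eqref{sigmabeta}.
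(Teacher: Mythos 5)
Your proposal is correct and follows essentially the same route as the paper: both arguments rest on the identity $c_\varphi(z,y,1)=\beta_\sigma(\varphi(z,y))$, the bijectivity of $\beta_\sigma$ with inverse $\rho_\sigma$ from Lemma \ref{betainv} to recover $\varphi$ (and hence $a_\varphi$ and $b_\varphi$ via \eqref{adef}--\eqref{bdef}), and equation \eqref{sigmabeta} together with $\C^\times$ equivariance to extend from $t=1$ to all $t$. Your version merely makes explicit, coefficient by coefficient, what the paper compresses into the remark that $a_\varphi(z,y,1)$ and $b_\varphi(z,y,1)$ are functions of $\rho_\sigma(1)(c_\varphi(z,y,1))$.
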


\begin{proof}
Let $\alpha(z, y) \in \g(\scrA)$, and $\rho_\sigma(1)$ the inverse
function of $\beta_\sigma$, as defined in Lemma \ref{betainv}.  Notice
that the quantities $a_\varphi(z,y,1)$, and $b_\varphi(z, y, 1)$ are
all functions of $\rho_\sigma(1)(c_\varphi(z,y,1))$.

Therefore, for any $\alpha \in \g(\scrA)$, I can write $\psi(z, y)
:=\rho_\sigma(1)(\alpha)\in G(\scrA)$. This defines a connection on
$B$ \bas \psi^*\omega_\sigma(z,y, 1) =
D(\rho_\sigma(1)(\alpha(z,y,1)))\;, \eas and by equation
\eqref{sigmabeta} \bas \psi^*\omega_\sigma(z,y, t) =
D(\rho_\sigma(t)(\alpha(z,y,t))) \;.\eas
\end{proof}

This allows one to relate two different physical $\beta$
functions.

\begin{theorem}
The physical $\beta$ functions
$\beta_{\sigma_{dr}}(\varphi_{dr,u}(z))$ and
$\beta_{\sigma_{mc}}(\varphi_{mc,z_0}(z, y))$ both define pullbacks of
a global connection, $\omega$ on $P$ to $B$. Therefore, they can be
related by a gauge transformation.
\end{theorem}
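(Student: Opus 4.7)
The plan is to invoke Theorem~\ref{betadef} twice---once for each regularization scheme---and then appeal to the gauge compatibility identity~\eqref{pullbackcond} from Lemma~\ref{logmult}. First I would identify each physical $\beta$ function as the $dt$-component, evaluated at $t=1$, of a pullback of the single global connection $\omega$ on $P$ to $B$. For dimensional regularization, the pair $(\varphi_{dr,u}, \sigma_{dr})$ determines the section $s_{dr} := i_{\sigma_{dr}} \circ (\varphi_{dr,u}, \mathrm{id})$ of $P \to B$, and by Theorem~\ref{betadef} together with formulas~\eqref{adef}--\eqref{cdef}, the $dt$-coefficient of $s_{dr}^{*}\omega$ at $t=1$ is precisely $\beta_{\sigma_{dr}}(\varphi_{dr,u}(z))$. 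The analogous construction with $(\varphi_{mc,z_0}, \sigma_{mc})$ produces a section $s_{mc}$ of $P \to B$ whose pullback $s_{mc}^{*}\omega$ carries $\beta_{\sigma_{mc}}(\varphi_{mc,z_0}(z,y))$ in the same component.

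Both $s_{dr}^{*}\omega$ and $s_{mc}^{*}\omega$ are then local expressions on $B$ of the same $\g$-valued one-form $\omega$ living on the total space $P$. Setting $g := s_{dr}^{\star -1} \star s_{mc}$, so that $s_{mc} = s_{dr} \star g$ in $G(\scrA[t^{-1},t])$, Lemma~\ref{logmult} combined with equation~\eqref{pullbackcond} immediately yields the gauge transformation identity relating $s_{mc}^{*}\omega$ to $s_{dr}^{*}\omega$ by conjugation and Maurer--Cartan term associated to $g$. Restricting this identity to $t=1$ and reading off $dt$-components exhibits the two physical $\beta$ functions as gauge transforms of one another, with gauge parameter $g$.

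The main point requiring care, and the only real obstacle, is ensuring that both sections genuinely land in the single total space $P$ so that their ratio $g$ lies in the common structure group $G(\scrA[t^{-1},t])$ and can act as a bona fide gauge transformation. This is supplied by the earlier construction of $P$, in which \emph{every} one-parameter diffeomorphism of $G(\scrA)$ embeds via $i_\sigma: K_\sigma \hookrightarrow P$; consequently both $\sigma_{dr}$ and $\sigma_{mc}$ factor through $P$, and $s_{dr}$ and $s_{mc}$ share a common $G$-fiber. Once this placement is established, the gauge relation is a formal consequence of the pullback identity for the logarithmic derivative already recorded in Lemma~\ref{logmult}, and no further analytic input is needed.
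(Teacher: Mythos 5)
Your proposal is correct and follows essentially the same route as the paper: both identify the two physical $\beta$ functions as the $dt$-components (at $t=1$) of pullbacks of the single global connection $\omega$ on $P$ via Theorem \ref{betadef} and the inclusions $i_{\sigma_{dr}}$, $i_{\sigma_{mc}}$, and both then derive the gauge relation from the logarithmic-derivative identity of Lemma \ref{logmult}. Your explicit choice of gauge parameter $g = s_{dr}^{\star -1}\star s_{mc}$ is a marginally cleaner packaging than the paper's formula $D(\varphi_{dr,u}\star\varphi_{mc,z_0}) = D(\varphi_{mc,z_0}) + \varphi_{mc,z_0}^{\star -1}\star D(\varphi_{dr,u})\star\varphi_{mc,z_0}$, but it is the same Leibniz identity and not a genuinely different argument.
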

\begin{proof}
By theorem \eqref{betadef}, $\beta_{\sigma_{dr}}$ and
$\beta_{\sigma_{mc}}$ define the connections $\omega_{\sigma_{dr}}$
and $\omega_{\sigma_{mc}}$ on $K_{\sigma_{dr}}$ and $K_{\sigma_{mc}}$
respectively. These are both pullbacks of a connection $\omega$ defined
on $P$ by logarithmic differentiation.

Since $\omega_{\sigma_{dr}} = i_{\sigma_{dr}}\omega$, for
$i_{\sigma_{dr}}$ and inclusion map, \bas
\varphi_{dr,u}^*\omega_{\sigma_{dr}} = \varphi_{dr,u}^*\omega\eas and
\bas \varphi_{mc,z_0}^*\omega_{\sigma_{mc}} =
\varphi_{dr,u}^*\omega\eas are both connections on $B$ with
$\varphi_{dr,u} $ and $\varphi_{mc,z_0}$ both sections of
$P\rightarrow B$. The connections defined by
$\beta_{\sigma_{dr}}(\varphi_{dr,u}(z))$ and
$\beta_{\sigma_{mc}}(\varphi_{mc,z_0}(z(t),y(s)))$ can be related by the
gauge transformation \bas D(\varphi_{dr, u} \star \varphi_{mc, z_0}) =
D (\varphi_{mc,z_0})+ \varphi_{mc, z_0}^{\star -1}\star D(\varphi_{dr,
  u})\star \varphi_{mc,z_0} \; .\eas
\end{proof}

 Loop-wise calculations for the $\beta$ functions for dimensionally
 regularized and cut-off regularized quantum electrodynamics give
 different values, starting at the 3-loop level \cite{IZ}. This
 theorem gives a geometric structure for understanding the relation
 between the two renormalization schemes.  While this paper has
 specifically examined a sharp momentum cut-off regulator, there are
 other related regularization schemes, such as smooth cut-off or
 Pauli-Villars regularization, that also have a structure of
 logarithmic singularities and finite order poles. Theories under
 these regularization schemes, and their $\beta$ functions, can also
 be expressed in terms of sections and connections of this
 renormalization bundle.

\bibliographystyle{amsplain}
\bibliography{/home/mithu/bibliography/Bibliography}{}

\end{document}